\theoremstyle{plain}
\newtheorem{thm}{Theorem}
\newtheorem{lem}[thm]{Lemma}
\newtheorem{col}[thm]{Corollary}
\newtheorem{pro}[thm]{Proposition}
\theoremstyle{remark}
\newtheorem*{rem}{Remark}
\newcommand{\1}{\mathds{1}}
\newcommand{\cX}{\mathcal{X}}
\newcommand{\cY}{\mathcal{Y}}
\newcommand{\cU}{\mathcal{U}}
\newcommand{\cT}{\mathcal{T}}
\newcommand{\cM}{\mathcal{M}}
\newcommand{\cS}{\mathcal{S}}
\newcommand{\cK}{\mathcal{K}}
\newcommand{\bX}{\bar{X}}
\newcommand{\bY}{\bar{Y}}
\newcommand{\bU}{\bar{U}}
\newcommand{\tX}{\tilde{X}}
\newcommand{\tY}{\tilde{Y}}
\newcommand{\tU}{\tilde{U}}
\newcommand{\tMo}{\tilde{M_{1}}}
\newcommand{\tMt}{\tilde{M_{2}}}
\newcommand{\supp}{\mathtt{supp}}
\def\BibTeX{{\rm B\kern-.05em{\sc i\kern-.025em b}\kern-.08em
		T\kern-.1667em\lower.7ex\hbox{E}\kern-.125emX}}
\begin{document}
\title{Tight Exponential Strong Converse for Source Coding Problem with Encoded Side Information
\thanks{Part of this paper was presented in 2023 IEEE International Symposium of Information Theory.}
}

\author{
	Daisuke Takeuchi
    \IEEEmembership{Student Member, IEEE},
	Shun Watanabe
    \IEEEmembership{Senior Member, IEEE}
    \thanks{D. Takeuchi and S. Watanabe are with the Department of Electrical Engineering and Computer Science, Tokyo University of Agriculture and Technology, Japan. (e-mail: takehome@vivaldi.net; shunwata@cc.tuat.ac.jp)}
}

\maketitle

\begin{abstract}
The source coding problem with encoded side information is considered.
A lower bound on the strong converse exponent has been derived by Oohama, but its tightness has not been clarified. In this paper, we derive a tight strong converse exponent.
For the special case such that the side-information does not exists, we demonstrate that our tight exponent of the WAK problem reduces to the known tight expression of that special case while Oohama's lower bound is strictly loose.
The converse part is proved by a judicious use of the change-of-measure argument, which was introduced by Gu-Effros and further developed by Tyagi-Watanabe.
Interestingly, the soft Markov constraint, which was introduced by Oohama as a proof technique, is naturally incorporated into the characterization of the exponent.
A technical innovation of this paper is recognizing that the soft Markov constraint is a part of the exponent, rather than a penalty term that should be vanished.
In fact, via numerical experiment, we provide evidence that the soft Markov constraint is strictly positive.
The achievability part is derived by a careful analysis of the type argument; however, unlike the conventional analysis for the achievable rate region, we need to derive the soft Markov constraint in the analysis of the correct probability.
Furthermore, we present an application of our derivation of strong converse exponent to the privacy amplification.
\end{abstract}

\begin{IEEEkeywords}
Source coding with side information, network information theory, strong converse theorem, exponent of correct probability, change of measure argument.
\end{IEEEkeywords}
\section{Introduction} \label{sec:intro}
\begin{figure}[h]
    \centering
    \includegraphics[scale=0.8]{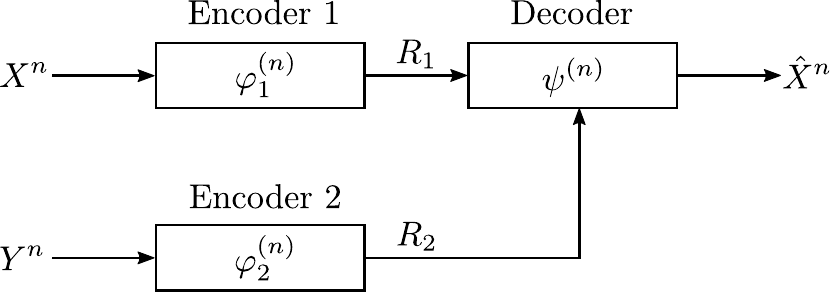}
    \caption{WAK network\protect\label{fig:WAK}}
\end{figure}
For source coding problems, the (weak) converse theorems state that, at transmission rates under the theoretical limit, there is no error-free coding scheme as the block length tends to infinity, i.e. the error probability cannot go to $0$.
On the other hand, the strong converse theorems state that, at transmission rates below the theoretical limit, the error probability converges to $1$ as the block length tends to infinity.

The source coding problem with encoded side information, also known as the Wyner-Ahlswede-Körner (WAK) problem, is one of fundamental problems in network information theory, and is described in Fig. \ref{fig:WAK}.
The WAK network consists of two encoders $\varphi^{(n)}_1,\varphi^{(n)}_2$ and one decoder $\psi^{(n)}$.
Both encoders are connected to the decoder by each channel, and the decoder aims to reproduce the source observed by the first encoder.
Under the condition of i.i.d. sources, this problem was solved by Wyner \cite{Wyn75}, and Ahlswede and Körner \cite{AK75} in the middle of 1970s.
According to their results, the information theoretical limit involves an auxiliary random variable with a Markov chain, which makes this problem difficult.

In this paper, we are interested in the strong converse theorem of WAK network.
As described later, it is known that for the WAK network, the error probability exponentially goes to $1$ when a rate pair is outside the achievable rate region.
On the other hand, an explicit form of the exponent function has not been clear.
Oohama \cite{Oo19} obtained a lower bound of the exponent by using the recursive method.
However, it was unknown whether Oohama's bound is tight or not.
Thus, we consider the exponent for the WAK network by using another method called the change of measure argument.
Furthermore, we discuss a property of the exponent through a numerical experiment, and we present an application of our derivation of strong converse exponent to one of the important concepts of cryptography: privacy amplification.

\subsection{Related Work}
Before stating our contribution, we would like to describe the history of exponential strong converse and change of measure argument.
For the WAK network, Ahlswede, Gács and Körner \cite{AGK76} proved that the correct probability goes to 0 if the rate pair is outside the rate region (\textit{strong converse}).
After that, Gu and Effros \cite{GE09} showed that the correct probability decreases exponentially by examining the proof of \cite{AGK76} (\textit{exponential strong converse}).
Since they just showed the convergence speed, they did not clarify the explicit form of the correct probability exponent in terms of the rate pair $(R_1,R_2)$.
Recently, Oohama \cite{Oo19} derived an explicit lower bound of this exponent function with parametric form.
However, it is unknown whether that bound is tight or not.

In 1973, Arimoto \cite{Ari73} analyzed the exponential strong converse for the point-to-point channel coding problem.
After that, in 1976, Dueck and Körner \cite{DK79} gave the strong converse exponent by using the type method.
Also, exponential strong converse has naturally been extended for multi-terminal networks.
For the Slepian-Wolf problem, Oohama and Han \cite{OH94} derived the optimal exponent by using universal coding in 1994.
Furthermore, as partially mentioned above, Oohama \cite{Oo16}, \cite{Oo19} derived the exponential strong converses for the WAK problem, the asymmetric broadcast channel problem \cite{KM77} and the Wyner-Ziv problem \cite{WZ75}.
However, due to Markov chains and an auxiliary random variable, the analysis of exponential strong converse for the WAK network is difficult and has not been studied much other than \cite{Oo19}.

The change of measure argument introduced by Gu and Effros \cite{GE09} is a useful technique for discussing the strong converses for various problems.
This enables us to prove strong converses using similar ways as those used for weak converses.
Then, it should be noted that Tyagi and Watanabe \cite{TW20} enabled the change of measure argument to be applied to a lot of distributed coding problems including the WAK network (see also an early attempt in \cite{Wat17b}).
After that, various coding problem (for example, \cite{SG20}, \cite{AWS23}) has been analyzed by using the change of measure argument.
More recently, Hamad, Wigger and Sarkiss \cite{HWM24} further developed the change of measure argument so that the Markov constraint can be proved in a different manner from \cite{TW20}.
In this paper, we derive a tight exponent of the correct probability for the WAK network by using the change of measure argument.
Particularly, our characterization clarifies a tight strong converse exponent for the problem such that there are both an auxiliary random variable and the Markov chain in the rate region.

For strong converses, various approaches have been proposed.
Kosut and Kliewer \cite{KK19} proposed an association between edge removal and strong converses, and also indicated that the capacity is not changed if an extra edge is removed.
Another related approach was proposed by Jose and Kulkarni \cite{JK17}, \cite{JK19}.
Their approach based on the linear programming (LP) analyzes and improves the converse bound for coding problems by introducing the so-called LP relaxation.
Furthermore, Liu, Handel and Verdú \cite{LHV19} developed an method for non-asymptotic converses by using reverse hypercontractivity of Markov semigroups.

\subsection{Main Contribution}
Looking at the above related works and remaining issues, we are motivated to consider the strong converse exponent for WAK network by judiciously using the change of measure argument.
In this work, we establish a tight characterization of the exponent.
Our main contributions are detailed as follows.

First, we apply the change of measure argument to the exponential strong converse problem for the source coding problem including the Markov chain and an auxiliary random variable.
Originally, the change of measure argument has been used in order to prove strong converses.
We now enables this technique to be applied to the analysis of strong converse exponents by focusing on the correct probability rather than achievable rates.
As mentioned above, a main difficulty to derive tight exponent is handling of the Markov chain constraint that appears in the characterization of the achievable rate region of the WAK problem.
Based on the idea of \cite{Oo18}, \cite{TW20} put the Markov penalty term in terms of the conditional mutual information in the correct probability exponent.
Rather than considering the Markov constraint as a penalty term that should be vanished, we considered regard it as a part of the exponent, which succeeded in deriving a tight exponent.
In fact, our numerical experiment suggests that the soft Markov constraint term is strictly positive.

Second, even though achievability part is proved using manipulations of the standard type method, proving the matching exponent is not trivial.
Particularly, as mentioned in the previous paragraph, our (converse) characterization of exponent may violate the Markov constraint.
However, in the WAK problem, the second encoder's coding can only depend on $Y^n$. Thus, we need to derive the Markov constraint term in the analysis of the correct probability.

Finally, our exponent is strictly tight compared to the previous bound \cite{Oo19} in a certain situation.
In the channel coding, there are two kinds of expressions for the strong converse exponent: the one by Arimoto \cite{Ari73} and the other by Dueck and Körner \cite{DK79}, which corresponds to the parametric function and the optimization one, respectively.
While the lower bound derived by Oohama \cite{Oo19} is closer to the expression by Arimoto, our characterization of the exponent is along the lines of the expression by Dueck and Körner.
Therefore, we cannot directly compare our exponent with Oohama's bound.
By converting our exponent expressed in an optimization problem into a parametric form, it become possible to analytically compare with the bound in \cite{Oo19} for the special case of single user source coding.

\subsection{Paper Organization}
This paper is organized as follows.
In Section \ref{sec:pre}, after introducing notations, we define the problem that we treat.
In Section \ref{sec:main}, we state our main result and provides some discussion of them.
In Section \ref{sec:comp}, we consider the comparison with Oohama's bound for the special case.
In Section \ref{sec:proof}, we give the proof of our result.
In Section \ref{sec:pa}, we consider an application to privacy amplification.
In Section VII, we conclude the paper.
\section{Preliminaries} \label{sec:pre}
\subsection{Notations}
Random variables are in capital case (e.g. $X$), and their realization are in lower case (e.g. $x$). All random variables take values in some alphabets that are in calligraphic letters (e.g. $\cX$).
We shall restrict our attention to finite alphabets only.
The cardinality of $\cX$ is denoted by $|\cX|$.
The probability distribution of random variable $X$ is denoted by $P_{X}$.
Random vectors and their realizations in the $n$th Cartesian product $\cX^n$ are denoted by $X^n=(X_{1},\dots,X_{n})$ and $x^n=(x_{1},\dots,x_{n})$, respectively.
Also, define $\mathcal{P}(\cX)$ as the set of all distribution on $\cX$.
For information quantities, we use the same notations as \cite{CK11}, such as the entropy $H(X)$, the mutual information $I(X\land Y)$ and the KL-divergence $D(P_X||Q_X)$.
For simplicity, we write $\cT_{X}$ for the type class of type $P_{X}$.
Also, we define the total variation distance $d_{\textnormal{var}}$ between two discrete distributions $P$ and $Q$ on a finite set $\cX$ as
\begin{align*}
d_{\textnormal{var}}(P,Q)=\frac{1}{2}\sum_{x\in\cX}|P(x)-Q(x)|.
\end{align*}
In the following, the base of logarithm and exponentiation is $2$.

\subsection{WAK Network}

We consider the network described in Fig. \ref{fig:WAK}. For a given
source $P_{XY}$ on a finite alphabet $\cX\times\cY $,
this coding system consists of two encoders
\begin{align*}
\varphi_{1}^{(n)}:\cX^n\to\cM_{1}^{(n)},\ \varphi_{2}^{(n)}:\cY^n\to\cM_{2}^{(n)},
\end{align*}
and one decoder
\begin{align*}
\psi^{(n)}:\cM_{1}^{(n)}\times\cM_{2}^{(n)}\to\cX^n.
\end{align*}
For simplicity, we write encoders and a decoder as $\varphi_{1},\varphi_{2}$
and $\psi$, respectively. Input sequences $X^n$ and $Y^n$ are
encoded by $\varphi_{1}$ and $\varphi_{2}$ separately, and the decoder
outputs the estimation $\hat{X}^n=\psi(\varphi_{1}(X^n),\varphi_{2}(Y^n))$
of $X^n$.

For i.i.d. $(X^n,Y^n)\sim P_{X^nY^n}$, the error probability of code $\Phi_{n}=(\varphi_{1},\varphi_{2},\psi)$ is defined as
\begin{align*}
P_{\textnormal{e}}(\Phi_{n}|P_{X^nY^n}):=\Pr(\hat{X}^n\neq X^n).
\end{align*}
A rate pair $(R_{1},R_{2})$ is called an achievable rate pair if there exists a sequence of codes such that
\begin{align*}
\limsup_{n\to\infty}\frac{1}{n}\log|\cM_{1}^{(n)}|\leq R_{1},\ \limsup_{n\to\infty}\frac{1}{n}\log|\cM_{2}^{(n)}|\leq R_{2},
\end{align*}
and
\begin{align*}
\lim_{n\to\infty}P_{\textnormal{e}}(\Phi_{n}|P_{X^nY^n})=0.
\end{align*}
Also, the achievable rate region $\mathcal{R}_{\textnormal{WAK}}$ is denoted by
\begin{align*}
\mathcal{R}_{\textnormal{WAK}}=\{(R_{1},R_{2}):(R_{1},R_{2})\text{ is achievable}\}.
\end{align*}
Wyner \cite{Wyn75}, Ahlswede and Körner \cite{AK75} determined $\mathcal{R}_{\textnormal{WAK}}$ as below.
\begin{thm}
	Consider the region
	\begin{align}
		\mathcal{R}=\bigcup_{\substack{
				X-Y-U\\
				|\cU|\leq|\cY|+1
		}}\{(R_{1},R_{2}):&R_{1}\geq H(X|U),R_{2}\geq I(Y\land U)\},
	\end{align}
	where the Markov chain $U-Y-X$ means
	\begin{align*}
	P_{UXY}=P_{X}P_{Y|X}P_{U|Y}.
	\end{align*}
	Then, it holds
	\begin{align}
	\mathcal{R}_{\textnormal{WAK}}=\mathcal{R}.
	\end{align}
\end{thm}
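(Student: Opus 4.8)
The plan is to prove the two inclusions $\cR\subseteq\cR_{\textnormal{WAK}}$ (achievability, by random coding) and $\cR_{\textnormal{WAK}}\subseteq\cR$ (converse, by single-letterization). Both $\cR$ and $\cR_{\textnormal{WAK}}$ are closed, so it is enough to treat rate pairs with strict inequalities on the achievability side and to pass to a limit on the converse side.

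\emph{Achievability.} Fix a test channel $P_{U|Y}$ with $X-Y-U$ and rates $R_{1}>H(X|U)$, $R_{2}>I(Y\land U)$. I would couple a covering argument for the helper with Slepian--Wolf binning for the main source. Generate $2^{nR_{2}}$ codewords $U^{n}(m)$ i.i.d.\ $\sim P_{U}^{\otimes n}$; the second encoder outputs an index $m$ with $(Y^{n},U^{n}(m))$ jointly typical, which exists with high probability because $R_{2}>I(Y\land U)$ (covering / type-covering lemma, cf.\ \cite[Ch.~9]{CK11}). A standard conditional-typicality lemma (the Markov lemma), using $X-Y-U$, then makes $(X^{n},Y^{n},U^{n}(m))$ jointly typical with high probability. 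The first encoder partitions $\cX^{n}$ into $2^{nR_{1}}$ bins and sends the bin index of $X^{n}$; knowing $U^{n}(m)$, the decoder returns the unique sequence in the announced bin that is jointly typical with $U^{n}(m)$. A union bound over the at most $2^{n(H(X|U)+o(1))}$ sequences jointly typical with a fixed $u^{n}$ shows the probability of a wrong competing sequence is at most $2^{n(H(X|U)-R_{1}+o(1))}\to 0$; averaging over codebooks and expurgating yields a code sequence with vanishing error.

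\emph{Converse.} Let $(R_{1},R_{2})$ be achievable via $\Phi_{n}=(\varphi_{1},\varphi_{2},\psi)$; put $M_{1}=\varphi_{1}(X^{n})$, $M_{2}=\varphi_{2}(Y^{n})$, and let $\epsilon_{n}:=P_{\textnormal{e}}(\Phi_{n}|P_{X^{n}Y^{n}})\to 0$, so Fano's inequality gives $H(X^{n}|M_{1},M_{2})\leq n\delta_{n}$ with $\delta_{n}\to 0$. The decisive choice is $U_{i}:=(M_{2},Y^{i-1})$. One checks: (i) since $M_{2}$ is a function of $Y^{n}$ and the pairs $(X_{j},Y_{j})$ are i.i.d., both $X_{i}-Y_{i}-U_{i}$ and $X_{i}-(M_{2},Y^{i-1})-X^{i-1}$ hold; (ii) $nR_{2}\geq H(M_{2})\geq I(M_{2}\land Y^{n})=\sum_{i}I(M_{2}\land Y_{i}|Y^{i-1})=\sum_{i}I(U_{i}\land Y_{i})$, using $I(Y^{i-1}\land Y_{i})=0$; (iii) $nR_{1}\geq H(M_{1}|M_{2})\geq I(X^{n}\land M_{1}|M_{2})=H(X^{n}|M_{2})-H(X^{n}|M_{1},M_{2})\geq H(X^{n}|M_{2})-n\delta_{n}$, while $H(X^{n}|M_{2})=\sum_{i}H(X_{i}|X^{i-1},M_{2})\geq\sum_{i}H(X_{i}|X^{i-1},M_{2},Y^{i-1})=\sum_{i}H(X_{i}|M_{2},Y^{i-1})=\sum_{i}H(X_{i}|U_{i})$, the penultimate equality being exactly the Markov relation $X_{i}-(M_{2},Y^{i-1})-X^{i-1}$ from (i). Introducing $Q\sim\textnormal{Unif}\{1,\dots,n\}$ independent of all variables and setting $U:=(U_{Q},Q)$, $X:=X_{Q}$, $Y:=Y_{Q}$ gives $X-Y-U$, $R_{2}\geq I(Y\land U)$ and $R_{1}+\delta_{n}\geq H(X|U)$. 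After the cardinality reduction below, one extracts a convergent subsequence of test channels and uses continuity of $H(X|U)$, $I(Y\land U)$ together with $\delta_{n}\to 0$ to conclude $(R_{1},R_{2})\in\cR$.

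\emph{Cardinality bound and main obstacle.} To get $|\cU|\leq|\cY|+1$, apply the Fenchel--Eggleston--Carath\'eodory support lemma to the functionals $u\mapsto(P_{Y|U}(\cdot|u),H(X|U=u),H(Y|U=u))$ of $P_{U}$, noting that $H(X|U=u)$ is a function of $P_{Y|U}(\cdot|u)$ alone because of the Markov chain: preserving $P_{Y}$ costs $|\cY|-1$ equalities and preserving the two remaining functionals $H(X|U)$ and $H(Y|U)$ (hence $I(Y\land U)=H(Y)-H(Y|U)$) costs two more, for a total of $|\cY|+1$. The subtle point of the whole argument is step (iii): one cannot simply drop the $X^{i-1}$ conditioning from $H(X_{i}|X^{i-1},M_{2})$, and the single-letterization succeeds precisely because $X_{i}-(M_{2},Y^{i-1})-X^{i-1}$, which itself follows from the memoryless source and from $M_{2}$ being a function of $Y^{n}$ only. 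Exhibiting one auxiliary $U_{i}$ that simultaneously delivers the $R_{1}$-bound, the $R_{2}$-bound and the Markov chain $X-Y-U$ is the real crux of the WAK converse.
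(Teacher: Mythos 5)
The paper does not prove this theorem at all: it is the classical Wyner--Ahlswede--K\"orner region, stated with a citation to \cite{Wyn75} and \cite{AK75}, so there is no in-paper argument to compare against. Your proof is the standard and correct one. The achievability side (i.i.d.\ covering at rate $R_2>I(Y\land U)$, the Markov lemma to lift $(Y^n,U^n(m))$-typicality to $(X^n,Y^n,U^n(m))$-typicality via $X-Y-U$, and Slepian--Wolf binning at rate $R_1>H(X|U)$) and the converse side are both sound, and you correctly isolate the crux: the single choice $U_i=(M_2,Y^{i-1})$ must simultaneously give the Markov chain $X_i-Y_i-U_i$, the bound $\sum_i I(U_i\land Y_i)\le nR_2$, and the step $H(X_i|X^{i-1},M_2,Y^{i-1})=H(X_i|M_2,Y^{i-1})$, the last of which does follow from $X_i-(M_2,Y^{i-1})-X^{i-1}$ because $M_2$ is a function of $Y^n$ alone and the source is memoryless. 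The cardinality count $(|\cY|-1)+2=|\cY|+1$ is also right, using that $H(X|U=u)$ is determined by $P_{Y|U}(\cdot|u)$ through the Markov chain. It is worth noting the contrast with what the paper actually does for its own results: in Section V the converse single-letterization is performed \emph{after} a change of measure, under which the i.i.d.\ structure and hence your Markov step (iii) are destroyed; there the auxiliary variable is taken as $\tU_J=(\tMt,\tX_J^-,\tY_J^-)$ and the failure of the hard Markov chain is absorbed into the soft penalty $I(\tU\land\tX|\tY)$, whereas in your classical setting the chain holds exactly and no such term appears.
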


Our main purpose is to find an explicit form of the exponent function for the error probability converging to $1$ as $n\to\infty$ when $(R_{1},R_{2})$ is not achievable. To achieve this aim, we define the correct probability $P_{\textnormal{c}}$ and the optimal exponent function $G$:
\begin{align*}
	P_{\textnormal{c}}(\Phi_{n}|P_{X^nY^n}) & =1-P_{\textnormal{e}}(\Phi_{n}|P_{X^nY^n}),\\
	G^{(n)}(R_{1},R_{2}|P_{X^nY^n}) & =\min_{\Phi_{n}}\left(-\frac{1}{n}\right)\log P_{c}(\Phi_{n}|P_{X^nY^n})\\
	G(R_{1},R_{2}|P_{XY}) & =\lim_{n\to\infty}G^{(n)}(R_{1},R_{2}|P_{X^nY^n}).
\end{align*}
According to \cite{Oo19}, the limit in the definition of $G$ exists, and the exponent function $G(R_{1},R_{2}|P_{XY})$ is a convex function of $(R_{1},R_{2})$.

Ahlswede, Gács and Körner \cite{AGK76} proved that the correct probability $P_{\textnormal{c}}$ goes to $0$ as $n\to\infty$.
Gu and Effros \cite{GE09} showed that the convergence speed of $P_{\mathrm{c}}$ is exponential. Also, Oohama \cite{Oo19} derived a lower bound on $G$.
However, it was not clarified whether the lower bound in \cite{Oo19} is tight or not.
In the next section, we shall show our result.

\section{Main Result} \label{sec:main}
In this section, we show our main result. Before showing it, we set
\begin{align}
	&F(R_{1},R_{2}|P_{XY}) = \min_{P_{\tU \tX \tY }\in\mathcal{P}(\cU \times\cX \times\cY )}\biggl\{D(P_{\tU \tX \tY }||P_{\tU |\tY }P_{XY})+\left|I(\tU \land\tY )-R_{2}\right|^{+}:R_{1}\geq H(\tX|\tU),|\cU |\leq|\cX||\cY |+2\biggr\},\label{eq:formula_R}
\end{align}
where $|a|^{+}=\max\{a,0\}$.

Now we show our main theorem by using this quantity.
\begin{thm}
	For any $P_{XY}$, we have
	\begin{align}
	G(R_{1},R_{2}|P_{XY})=F(R_{1},R_{2}|P_{XY}).
	\end{align}
    \label{thm:main}
\end{thm}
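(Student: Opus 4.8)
The plan is to prove the matching inequalities $G(R_1,R_2|P_{XY})\ge F(R_1,R_2|P_{XY})$ (exponential strong converse) and $G(R_1,R_2|P_{XY})\le F(R_1,R_2|P_{XY})$ (achievability). Two rewritings of the objective in \eqref{eq:formula_R} will be used throughout. By the chain rule for divergence,
\begin{align}
D(P_{\tU\tX\tY}\|P_{\tU|\tY}P_{XY})=D(P_{\tX\tY}\|P_{XY})+I(\tU\land\tX|\tY),\label{eq:sk-split}
\end{align}
exhibiting the objective as a ``distance to the source'' term $D(P_{\tX\tY}\|P_{XY})$, the \emph{soft Markov term} $I(\tU\land\tX|\tY)$, and the rate deficit $|I(\tU\land\tY)-R_2|^+$; splitting $D(P_{\tX\tY}\|P_{XY})$ over $\tY$ and recombining with $I(\tU\land\tX|\tY)$ further gives
\begin{align}
D(P_{\tU\tX\tY}\|P_{\tU|\tY}P_{XY})=D(P_{\tY}\|P_Y)+\sum_{u,y}P_{\tU\tY}(u,y)\,D(P_{\tX|\tU=u,\tY=y}\|P_{X|Y=y}).\label{eq:sk-split2}
\end{align}

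\emph{Converse ($G\ge F$).} Fix a code $\Phi_n=(\varphi_1,\varphi_2,\psi)$ and put $M_1=\varphi_1(X^n)$, $M_2=\varphi_2(Y^n)$, $\cE=\{\psi(M_1,M_2)=X^n\}$, so $P_{\mathrm c}(\Phi_n|P_{X^nY^n})=P_{X^nY^n}(\cE)$. As there are only $\mathrm{poly}(n)$ joint types on $\cX\times\cY$, some type $V$ has $P_{X^nY^n}(\cE\cap\cT_V)\ge P_{\mathrm c}(\Phi_n|P_{X^nY^n})/\mathrm{poly}(n)$, and for the change of measure $Q_{X^nY^n}:=P_{X^nY^n}(\,\cdot\mid\cE\cap\cT_V)$ one has $D(Q_{X^nY^n}\|P_{X^nY^n})=-\log P_{X^nY^n}(\cE\cap\cT_V)$, hence
\begin{align}
-\frac1n\log P_{\mathrm c}(\Phi_n|P_{X^nY^n})\ge\frac1n D(Q_{X^nY^n}\|P_{X^nY^n})-o(1).\label{eq:sk-com}
\end{align}
Under $Q$, $X^n$ is a function of $(M_1,M_2)$, so $H_Q(X^n|M_1M_2)=0$; combined with $\log|\cM_i^{(n)}|\le nR_i$ this gives $nR_1\ge H_Q(M_1)\ge H_Q(X^n|M_2)$ and $nR_2\ge H_Q(M_2)\ge I_Q(Y^n\land M_2)$. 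With a time index $T$ uniform on $\{1,\dots,n\}$, set $\tU:=(M_2,X^{T-1},T)$, $\tX:=X_T$, $\tY:=Y_T$; since $H_Q(X^n|M_2)=\sum_i H_Q(X_i|X^{i-1}M_2)=nH_Q(\tX|\tU)$, the first bound is exactly $R_1\ge H_Q(\tX|\tU)$, so the hard constraint in \eqref{eq:formula_R} is met. The core step is to single-letterize
\begin{align}
D(Q_{X^nY^n}\|P_{X^nY^n})=\sum_{i=1}^{n}\Bigl[D(Q_{X_iY_i}\|P_{XY})+I_Q(X_iY_i\land X^{i-1}Y^{i-1})\Bigr]\label{eq:sk-chain}
\end{align}
into $\frac1n D(Q_{X^nY^n}\|P_{X^nY^n})\ge D(Q_{\tX\tY}\|P_{XY})+I_Q(\tU\land\tX|\tY)+|I_Q(\tU\land\tY)-R_2|^+-o(1)$, every ``dependency'' correction coming from the non-i.i.d.\ law of $Q$ and from $M_2$ being a function of all of $Y^n$ being absorbed into the conditional-mutual-information terms in \eqref{eq:sk-chain}. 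A final application of the support lemma replaces $\tU$ by a variable on an alphabet of size at most $|\cX||\cY|+2$ keeping $P_{\tX\tY}$, $H_Q(\tX|\tU)$, $I_Q(\tU\land\tY)$ and $I_Q(\tU\land\tX|\tY)$ unchanged; with \eqref{eq:sk-split} and \eqref{eq:sk-com} we conclude $-\frac1n\log P_{\mathrm c}(\Phi_n|P_{X^nY^n})\ge F(R_1,R_2|P_{XY})-o(1)$ for every code, i.e.\ $G\ge F$.

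\emph{Achievability ($G\le F$).} Let $P_{\tU\tX\tY}$ attain the minimum in \eqref{eq:formula_R}; for block length $n$ fix a joint type within $\varepsilon$ of it. Encoder $2$ draws $L=\min\{2^{nR_2},2^{n(I(\tU\land\tY)+\varepsilon)}\}$ codewords uniformly at random from $\cT_{\tU}$ and, on input $y^n\in\cT_{\tY}$, transmits the index of a codeword $u^n$ with $(u^n,y^n)\in\cT_{\tU\tY}$ if one exists; encoder $1$ bins $\cX^n$ uniformly at random into $2^{nR_1}$ bins and transmits the bin index of $x^n$; the decoder recovers $u^n$ from $M_2$ and outputs the unique $x^n$ in bin $M_1$ with $(u^n,x^n)\in\cT_{\tU\tX}$. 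Evaluating the correct-decoding probability of this random code under the true source via type counting --- $Y^n\in\cT_{\tY}$ with probability $\doteq 2^{-nD(P_{\tY}\|P_Y)}$; a uniform $u^n\in\cT_{\tU}$ is $\tU\tY$-typical with a given $y^n$ with probability $\doteq 2^{-nI(\tU\land\tY)}$, so $y^n$ is covered with probability $\gtrsim 2^{-n|I(\tU\land\tY)-R_2|^+}$; since $X^n$ given $Y^n=y^n$ is conditionally i.i.d.\ $\sim P_{X|Y}$, the triple $(U^n,X^n,Y^n)$ has joint type $P_{\tU\tX\tY}$ with probability $\doteq 2^{-n\sum_{u,y}P_{\tU\tY}(u,y)D(P_{\tX|\tU=u,\tY=y}\|P_{X|Y=y})}$; and, the number of candidate sequences of conditional type $P_{\tX|\tU}$ compatible with $u^n$ being $\doteq 2^{nH(\tX|\tU)}\le 2^{nR_1}$, the binning returns $X^n$ uniquely with probability at worst sub-exponentially small --- one obtains
\begin{align}
-\frac1n\log\mathbb{E}\bigl[P_{\mathrm c}(\Phi_n|P_{X^nY^n})\bigr]\le{}&D(P_{\tY}\|P_Y)+|I(\tU\land\tY)-R_2|^+\nonumber\\
&{}+\sum_{u,y}P_{\tU\tY}(u,y)\,D(P_{\tX|\tU=u,\tY=y}\|P_{X|Y=y})+o(1).\label{eq:sk-ach}
\end{align}
By \eqref{eq:sk-split2} the right-hand side equals $F(R_1,R_2|P_{XY})+O(\varepsilon)+o(1)$, so a deterministic code attaining it exists; letting $\varepsilon\downarrow0$ gives $G\le F$.

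\emph{Main obstacle.} The delicate part is the single-letterization in the converse: the auxiliary $\tU$ must simultaneously make $R_1\ge H_Q(\tX|\tU)$ hold with no slack and make \emph{every} correction caused by the correlations in $Q$ and by $M_2$ depending on the whole of $Y^n$ controllable by the terms $I_Q(X_iY_i\land X^{i-1}Y^{i-1})$ of \eqref{eq:sk-chain}. This is precisely where the soft Markov term $I_Q(\tU\land\tX|\tY)$ and the deficit $|I_Q(\tU\land\tY)-R_2|^+$ are forced to appear \emph{as parts of the exponent} rather than as penalties to be driven to zero. The achievability uses only type covering and random binning, but the analysis must be run on the correct probability rather than the error probability so that these two terms are reproduced.
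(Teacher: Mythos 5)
Your proposal is correct and follows essentially the same route as the paper: a change-of-measure argument with a time-shared auxiliary $\tilde{U}=(M_2,\text{past},T)$ and the identity $I(\tilde{M}_2\land\tilde{X}^n\mid\tilde{Y}^n)=0$ for the converse, and a covering-plus-binning code whose \emph{correct} probability is evaluated by type counting for achievability, with the decomposition \eqref{eq:sk-split2} recovering $D(P_{\tilde{U}\tilde{X}\tilde{Y}}\|P_{\tilde{U}|\tilde{Y}}P_{XY})$. The only deviations are cosmetic: you additionally condition the changed measure on a dominant type class, you use a randomly drawn covering codebook where the paper uses a deterministic covering lemma truncated to $2^{nR_2}$ messages, and the single-letterization you defer as the ``main obstacle'' is exactly the one the paper executes (so a small rate slack $R_1-2\epsilon$ is still needed to make the binning collision probability vanish, as the paper does via continuity of $F$).
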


The proof of this result can be found in Section \ref{sec:proof}.
As stated in Section \ref{sec:intro}, there are two ways of expressing the exponent: parametric \cite{Ari73} and optimization \cite{DK79} function.
Unlike Oohama's bound with parametric notations \cite{Oo19}, we derive the exponent as an optimization function.
Interestingly, although the Markov chain in the achievable rate region cannot be explicitly seen in \eqref{eq:formula_R}, the conditional mutual information in $D(P_{\tU \tX \tY }||P_{\tU |\tY }P_{XY})$ plays the role.
In fact, we can break it into the KL-divergence and the conditional mutual information as
\begin{align}
    D(P_{\tU \tX \tY }||P_{\tU |\tY }P_{XY}) &= D(P_{\tX \tY }||P_{XY})+D(P_{\tU |\tX \tY }||P_{\tU |\tY }|P_{\tX \tY }) \nonumber\\
    &= D(P_{\tX \tY }||P_{XY})+I(\tU \land\tX |\tY ).
    \label{eq:divandI1}
\end{align}
Then, we get $I(\tU \land \tX|\tY) = 0$ if the Markov chain is satisfied, which corresponds to replacing the ``hard'' information constraint with the ``soft'' cost introduced by Oohama \cite{Oo18}, \cite{Oo19}.

Moreover, the exponent $F(R_{1},R_{2}|P_{XY})$ is strictly positive when the rate pair $(R_1,R_2)$ is outside the achievable rate region $\mathcal{R}_{\textnormal{WAK}}$.
Actually, we can verify this property by considering the contraposition that $(R_1,R_2)$ is inside the achievable rate region when $F(R_{1},R_{2}|P_{XY}) = 0$ that is derived from the Markov constraint $I(\tU \land \tX|\tY) = 0$ and the fact $P_{\tX\tY} = P_{XY}$ since $D(P_{\tX\tY}||P_{XY}) = 0$.

In the following, we investigate our result to observe its properties.

\subsection{Numerical Experiment}
In this section, we make numerical calculation in a specific situation.
Since our exponent has an auxiliary random variable, it is not easy to obtain the optimal value.
Even though it is difficult to exactly evaluate the characterization of our exponent, in the following, we evaluate the exponent for a natural choice of auxiliary distribution to get some insight.
We investigate the properties of the exponent by considering a simple case, doubly symmetric binary source (DSBS).

Consider a DSBS with probability $p \in (0,1/2)$, whose distribution is denoted by DSBS$(p)$.
Then, let $P_{XY} = \textnormal{DSBS}(p)$.
In other words, $X$ is a Bernoulli random variable with parameter $1/2$, and $Y$ is the output of a binary symmetric channel BSC$(p)$ with crossover probability $p$ when the input is $X$.
Also, fix the auxiliary alphabet as $|\cU| = 2$.
Furthermore, for the purpose of numerical calculation, we choose $P_{\tU\tY} = \textnormal{DSBS}(\beta)$ and
\begin{align*}
	P_{\tX|\tU\tY}(x|u,y) = \begin{cases}
		q_0 & (u = y \textnormal{ and } x \neq y), \\
		1-q_0 & (u = y \textnormal{ and } x = y), \\
		q_1 & (u \neq  y \textnormal{ and } x \neq y), \\
		1-q_1 & (u \neq  y \textnormal{ and } x = y),
	\end{cases}
\end{align*}
for $q_0, q_1 \in (0,1/2)$.
In this situation, we get
\begin{align*}
	P_{\tX|\tU}(x|u) = \begin{cases}
		(1-\beta)(1-q_0)+\beta q_1 & (u = x), \\
		\beta (1-q_1)+(1-\beta)q_0 & (u \neq x) .
	\end{cases}
\end{align*}
Therefore, we can see
\begin{align*}
	H(\tX|\tU) &= h((1-\beta)(1-q_0)+\beta q_1),\\
	I(\tU \land \tY) &= 1-h(\beta),\\
	D(P_{\tU\tX\tY}||P_{\tU|\tY}P_{XY}) &= D(P_{\tY}||P_Y)+D(P_{\tX|\tU\tY}||P_{X|Y}|P_{\tU\tY})\\
	&=(1-\beta)D(q_0||p)+\beta D(q_1||p),
\end{align*}
where $h(\alpha)$ is a binary entropy function and $D(x||y) = x\log (x/y)+(1-x)\log ((1-x)/(1-y))$.
Thus, we can derive an upper bound of the strong converse exponent as
\begin{align}
	\min_{(\beta,q_0,q_1) \in [0,1]^3}\{(1-\beta)D(q_0||p)+\beta D(q_1||p)+|1-h(\beta)-R_2|^+:h((1-\beta)(1-q_0)+\beta q_1) \leq R_1\}.
	\label{eq:exp_bound}
\end{align}
Note that the Markov chain $\tU-\tY-\tX$ holds if and only if $q_0 = q_1$.

\begin{figure}[h]
	\centering
	\includegraphics[scale=0.8]{./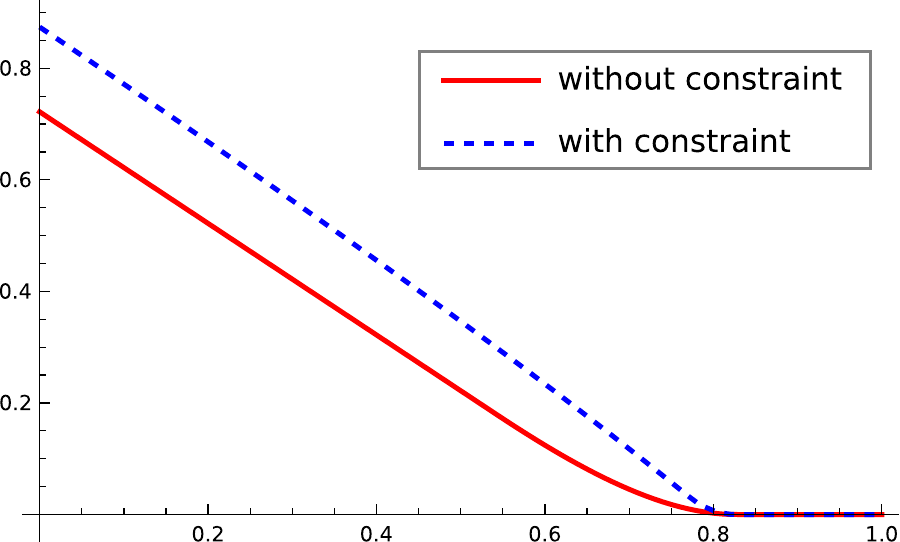}
	\caption{Graphs of \eqref{eq:exp_bound} for $(p, R_2) = (0.1,1-h(0.2))$.
	The vertical axis shows the value of \eqref{eq:exp_bound} and the horizontal axis shows the value of $R_1$.
	The solid line indicates the value without the Markov constraint for $P_{\tU\tX\tY}$, and the dashed line indicate the value with the constraint.
	\protect\label{fig:graph}
	}
\end{figure}
Now we compute \eqref{eq:exp_bound} with numerical experiment.
Figure \ref{fig:graph} shows the plots of the values of \eqref{eq:exp_bound} for $(p, R_2) = (0.1,1-h(0.2))$ in the range $R_1 \in [0,1]$, where the vertical axis is the value of \eqref{eq:exp_bound} and the horizontal axis is the value of $R_1$.
Also, the solid line indicates the value without the Markov constraint for $P_{\tU\tX\tY}$, and the dashed line indicate the value with the constraint, i.e., $q_0 = q_1$.

In Figure \ref{fig:graph}, we can see that the value are smaller when there is not the Markov constraint.
We recall that $I(\tU \land \tX|\tY) = 0$ corresponds to a Markov constraint for our exponent.
Furhermore, we recall that the auxiliary random variable must satisfy the Markov constriant in the characterization of the achievable rate region.
Therefore, one might expect that the optimal value of \eqref{eq:exp_bound} is attained when the Markov constraint is satisfied, i.e., when $q_0 = q_1$.
However, Figure \ref{fig:graph} indicates the different result, which implies that the Markov constraint for $P_{\tU\tX\tY}$ doesn't necessarily yield the optimal exponent.

\subsection{Network with Non-encoded side information}
\begin{figure}[h]
	\centering
	\includegraphics[scale=0.8]{./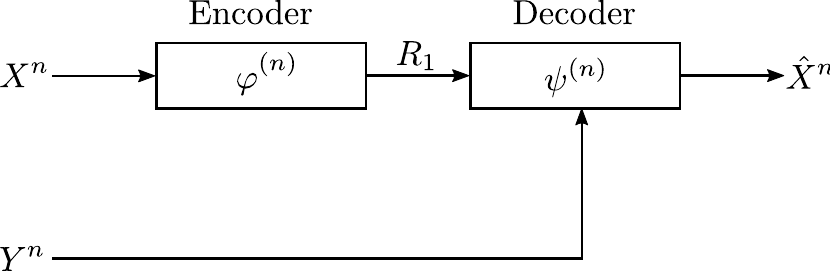}
	\caption{Network with non-encoded side information\protect\label{fig:FS}}
\end{figure}
Let us consider a sub-problem of WAK coding (also Slepian-Wolf coding), a network with non-coded side information described in Fig. \ref{fig:FS}.
This network corresponds to the WAK network such that $R_2 \geq \log |\cY|$.

For this network, we define the optimal exponent function $G_\textnormal{NE}$:
\begin{align*}
	P_{\textnormal{c}}(\Phi_{n}|P_{X^nY^n}) & =1-P_{\textnormal{e}}(\Phi_{n}|P_{X^nY^n}),\\
	G_\textnormal{NE}^{(n)}(R_1|P_{X^nY^n}) & =\min_{\Phi_{n}}\left(-\frac{1}{n}\right)\log P_{c}(\Phi_{n}|P_{X^nY^n}),\\
	G_\textnormal{NE}(R_1|P_{XY}) & =\lim_{n\to\infty}G_\textnormal{NE}^{(n)}(R_1|P_{X^nY^n}),
\end{align*}
and we set
\begin{align}
	&F_\textnormal{NE}(R_1|P_{XY}) = \min_{P_{\tX\tY}\in\mathcal{P}(\cX\times\cY)} \biggl\{D(P_{\tX\tY}||P_{XY})+\left|H(\tX|\tY)-R_1\right|^{+}\biggr\}.
	\label{eq:formula_FS}
\end{align}
Then, by using this quantity, we have the exponent.
For any $P_{XY}$, we can obtain the following corollary.
\begin{col}\label{thm:fs}
	For $R_2 \geq \log |\cY|$, we have
	\begin{align}
		G(R_{1},R_{2}|P_{XY}) = F_\textnormal{NE}(R_1|P_{XY}).
	\end{align}
\end{col}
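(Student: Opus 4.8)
The plan is to derive Corollary~\ref{thm:fs} from Theorem~\ref{thm:main} by specializing the characterization \eqref{eq:formula_R} to the regime $R_2 \geq \log|\cY|$: since $G = F$ there, it suffices to prove $F(R_1,R_2|P_{XY}) = F_\textnormal{NE}(R_1|P_{XY})$ in this regime. The first step is to observe that the penalty term in \eqref{eq:formula_R} is inactive, because $I(\tU\land\tY) \leq H(\tY) \leq \log|\cY| \leq R_2$ for every $P_{\tU\tX\tY}$, so $|I(\tU\land\tY)-R_2|^+ = 0$ and
\begin{align*}
F(R_1,R_2|P_{XY}) = \min_{\substack{P_{\tU\tX\tY}:\ H(\tX|\tU)\leq R_1,\\ |\cU|\leq|\cX||\cY|+2}} D(P_{\tU\tX\tY}||P_{\tU|\tY}P_{XY}).
\end{align*}
By the decomposition \eqref{eq:divandI1} the objective equals $D(P_{\tX\tY}||P_{XY}) + I(\tU\land\tX|\tY)$, so what remains is to show that, for a fixed $P_{\tX\tY}$, the minimal achievable value of $I(\tU\land\tX|\tY)$ subject to $H(\tX|\tU)\leq R_1$ is exactly $|H(\tX|\tY)-R_1|^+$.

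For the lower bound $F \geq F_\textnormal{NE}$, I would use that for any feasible $\tU$,
\begin{align*}
I(\tU\land\tX|\tY) = H(\tX|\tY) - H(\tX|\tU,\tY) \geq H(\tX|\tY) - H(\tX|\tU) \geq H(\tX|\tY) - R_1,
\end{align*}
which, together with $I(\tU\land\tX|\tY)\geq 0$, gives $I(\tU\land\tX|\tY) \geq |H(\tX|\tY)-R_1|^+$; thus every feasible point of the reduced $F$ has value at least $D(P_{\tX\tY}||P_{XY}) + |H(\tX|\tY)-R_1|^+$, which is itself a feasible value for the minimization defining $F_\textnormal{NE}$.

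For the matching upper bound, given a minimizer $P_{\tX\tY}^\circ$ of $F_\textnormal{NE}$ I would construct an admissible $\tU$. If $H(\tX^\circ|\tY^\circ) \leq R_1$, take $\tU = \tY^\circ$: this is feasible and $I(\tU\land\tX^\circ|\tY^\circ) = 0$, so the objective equals $D(P_{\tX^\circ\tY^\circ}||P_{XY}) = F_\textnormal{NE}$. If $H(\tX^\circ|\tY^\circ) > R_1$, I would use a time-sharing construction: let $T$ be Bernoulli with $\Pr(T=1) = 1 - R_1/H(\tX^\circ|\tY^\circ)$, independent of $(\tX^\circ,\tY^\circ)$, and set $\tU = (\tY^\circ,\tX^\circ)$ when $T=1$ and $\tU = (\tY^\circ,\star)$ when $T=0$, where $\star$ is an extra placeholder symbol; then $H(\tX^\circ|\tU) = \Pr(T=0)\,H(\tX^\circ|\tY^\circ) = R_1$ and $I(\tU\land\tX^\circ|\tY^\circ) = H(\tX^\circ|\tY^\circ) - R_1$, so the objective again equals $F_\textnormal{NE}$. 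The step I expect to need the most care is the cardinality bound: this construction yields an auxiliary alphabet of size up to $|\cY|(|\cX|+1)$, so I would then invoke the support lemma (cf.\ \cite{CK11}) to re-bin $\tU$ onto an alphabet of size at most $|\cX||\cY|+1 \leq |\cX||\cY|+2$, preserving $P_{\tX\tY}$, $H(\tX|\tU)$ and $H(\tX|\tU,\tY)$, hence preserving both feasibility and the objective value. Combining the two bounds gives $F = F_\textnormal{NE}$ and thus the corollary; I would close by noting that $F_\textnormal{NE}$ recovers the known tight correct-probability exponent for the degenerate case with uncoded decoder side information, as anticipated in Section~\ref{sec:intro}.
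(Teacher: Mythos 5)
Your proposal is correct, and its overall skeleton matches the paper's: both proofs reduce to showing $F = F_\textnormal{NE}$ via the decomposition $D(P_{\tU\tX\tY}\|P_{\tU|\tY}P_{XY}) = D(P_{\tX\tY}\|P_{XY}) + I(\tU\land\tX|\tY)$, and your lower bound ($I(\tU\land\tX|\tY) = H(\tX|\tY)-H(\tX|\tU,\tY) \geq H(\tX|\tY)-H(\tX|\tU) \geq H(\tX|\tY)-R_1$, together with nonnegativity) is exactly the chain in \eqref{eq:conFS1}--\eqref{eq:conFS2}. Where you genuinely diverge is the upper bound. The paper restricts the minimization to $\tU=\tY$ (which kills the penalty term because $H(\tY)\leq\log|\cY|\leq R_2$, the same observation you make globally up front) and then converts the resulting hard-constrained problem $\min\{D(P_{\tX\tY}\|P_{XY}) : R_1\geq H(\tX|\tY)\}$ into the soft-penalty form by citing \cite[Problem 2.6]{CK11}. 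You instead keep the constraint $H(\tX|\tU)\leq R_1$ and, when $H(\tX^\circ|\tY^\circ)>R_1$, build an explicit time-sharing auxiliary $\tU$ that reveals $\tX^\circ$ with probability $1-R_1/H(\tX^\circ|\tY^\circ)$, which attains $H(\tX^\circ|\tU)=R_1$ and $I(\tU\land\tX^\circ|\tY^\circ)=H(\tX^\circ|\tY^\circ)-R_1$ exactly; your construction is self-contained and makes the origin of the $|H(\tX|\tY)-R_1|^+$ term operationally transparent, at the cost of an enlarged auxiliary alphabet that you correctly repair with the support lemma (the paper's choice $\tU=\tY$ trivially meets the cardinality bound but outsources the hard-vs-soft constraint equivalence to the cited exercise). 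Both routes are sound; as a minor point, your explicit remark that the penalty term vanishes identically for all $P_{\tU\tX\tY}$ when $R_2\geq\log|\cY|$ is cleaner than the paper's presentation, in which the two inequality labels are effectively swapped relative to the arguments given.
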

\begin{IEEEproof}
	To prove Corollary \ref{thm:fs}, we shall show two inequalities:
	\begin{align}
		G(R_{1},R_{2}|P_{XY}) \geq F_\textnormal{NE}(R_1|P_{XY}), \label{eq:fs_geq} \\
		G(R_{1},R_{2}|P_{XY}) \leq F_\textnormal{NE}(R_1|P_{XY}). \label{eq:fs_leq}
	\end{align}
	For \eqref{eq:fs_geq}, let $\tU =\tY$ for $F(R_{1},R_{2}|P_{XY})$ , and this derives
	\begin{align}
		G(R_{1},R_{2}|P_{XY}) &= F(R_{1},R_{2}|P_{XY}) \nonumber \\
		&\geq \min_{\substack{
				P_{\tU \tX \tY }\in\mathcal{P}(\cU \times\cX \times\cY )\\
				\tU = \tY
		}}\biggl\{D(P_{\tU \tX \tY }||P_{\tU |\tY }P_{XY})+\left|I(\tU \land\tY )-R_{2}\right|^{+}:R_{1}\geq H(\tX|\tU),|\cU |\leq|\cX||\cY |+2\biggr\} \nonumber \\
		&= \min_{P_{\tX\tY}\in\mathcal{P}(\cX\times\cY)} \biggl\{D(P_{\tX\tY}||P_{XY}):R_{1}\geq H(\tX|\tY)\biggr\}. \nonumber \\
		&= \min_{P_{\tX\tY}\in\mathcal{P}(\cX\times\cY)} \biggl\{D(P_{\tX\tY}||P_{XY})+\left|H(\tX|\tY)-R_1\right|^{+}\biggr\},
	\end{align}
	where the last equality can be obtained by considering similarly to part of \cite[Problem 2.6]{CK11}.

	On the other hand, for \eqref{eq:fs_leq}, when $R_1 \leq H(\tX|\tY)$, the KL-divergence $D(P_{\tU\tX\tY}||P_{\tU|\tY}P_{XY})$ can be evaluated as
	\begin{align}
		D(P_{\tU\tX\tY}||P_{\tU|\tY}P_{XY}) &= D(P_{\tX\tY}||P_{XY})+I(\tU \land \tX|\tY) \nonumber\\
		&= D(P_{\tX\tY}||P_{XY})+H(\tX|\tY)-H(\tX|\tU,\tY) \nonumber\\
		&\geq D(P_{\tX\tY}||P_{XY})+H(\tX|\tY)-R_1 \nonumber\\
		&= D(P_{\tX\tY}||P_{XY})+\left|H(\tX|\tY)-R_1\right|^{+},
		\label{eq:conFS1}
	\end{align}
	where the inequality is derived by the condition of $R_1$:
	\begin{align*}
		R_1 &\geq H(\tX|\tU) \geq H(\tX|\tU,\tY).
	\end{align*}
	Also, when $R_1 \geq H(\tX|\tY)$, we can see
	\begin{align}
		D(P_{\tU\tX\tY}||P_{\tU|\tY}P_{XY}) &\geq D(P_{\tX\tY}||P_{XY}), \nonumber \\
		&= D(P_{\tX\tY}||P_{XY})+\left|H(\tX|\tY)-R_1\right|^{+},
		\label{eq:conFS2}
	\end{align}
	and we combine \eqref{eq:conFS1} and \eqref{eq:conFS2}, then  we obtain
	\begin{align}
		&G(R_{1},R_{2}|P_{XY}) \geq \min_{P_{\tX\tY}\in\mathcal{P}(\cX\times\cY )}\biggl\{D(P_{\tX\tY}||P_{XY})+\left|H(\tX|\tY)-R_1\right|^{+}\biggr\},
	\end{align}
	which completes the proof.
\end{IEEEproof}

\section{Comparison to Oohama's bound} \label{sec:comp}
In this chapter, we now compare Oohama's bound and our exponent for the simplest network, i.e., the single-user network.
Before showing it, we introduce Oohama's expression as below \cite{Oo19}.

\begin{pro}\label{thm:oohama_bound}
    Set
    \begin{align*}
    \tilde{\mathcal{P}}(P_{XY}) \!=\! \{
    P_{\tU\tX\tY} : |\cU| \leq |\cY|,U-Y-X,P_{X|Y} = P_{\tX|\tY}
    \}.
    \end{align*}
    For $(\mu,\alpha) \in [0,1]^2$ and for $P_{\tU\tX\tY} \in \tilde{\mathcal{P}}(P_{XY})$, define
    \begin{align*}
        \tau_{\tU\tX\tY|XY}^{(\mu,\alpha)}(x,y|u) &= (1-\alpha)\log\frac{P_{\tY}(y)}{P_Y(y)}+\alpha\left(\mu\log\frac{P_{\tY|\tU}(y|u)}{P_Y(y)}+(1-\mu)\log\frac{1}{P_{\tX|\tU}(x|u)}\right), \\
        \omega^{(\mu,\alpha)}(P_{\tU\tX\tY}|P_{XY}) &= -\log \sum_{(u,x,y)} P_{\tU\tX\tY}(u,x,y)\cdot\exp(-\tau_{\tU\tX\tY|XY}^{(\mu,\alpha)}(x,y|u)), \\
        \Omega^{(\mu,\alpha)}(P_{XY}) &= \min_{P_{\tU\tX\tY}}\omega^{(\mu,\alpha)}(P_{\tU\tX\tY}|P_{XY}), \\
        f_\textnormal{o}^{(\mu,\alpha)}(R_1,R_2|P_{XY}) &\!=\! \frac{\Omega^{(\mu,\alpha)}(P_{XY})\!-\!\alpha((1-\mu)R_1+\mu R_2)}{2+\alpha(1-\mu)} \\
        F_\textnormal{o}(R_1,R_2|P_{XY}) &= \sup_{(\mu,\alpha)\in[0,1]^2} f_\textnormal{o}^{(\mu,\alpha)}(R_1,R_2|P_{XY}).
    \end{align*}
    Then, for any $P_{XY}$, we have
    \begin{align}
    P_{\textnormal{c}}(\Phi_{n}|P_{X^nY^n}) \leq 5\exp(-nF_\textnormal{o}(R_1,R_2|P_{XY})).
    \end{align}
\end{pro}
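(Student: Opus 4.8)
\begin{IEEEproof}
This is precisely the lower bound on the WAK strong converse exponent established by Oohama \cite{Oo19}, so the plan is to recall the skeleton of his recursive argument rather than to give an independent proof.

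First I would fix a code $\Phi_n=(\varphi_1,\varphi_2,\psi)$, write $M_1=\varphi_1(X^n)$, $M_2=\varphi_2(Y^n)$, $\hat X^n=\psi(M_1,M_2)$, and let $\mathcal{D}_n\subseteq\cX^n\times\cY^n$ be the set of correctly decoded pairs, so that $P_{\textnormal{c}}(\Phi_n|P_{X^nY^n})=P_{X^nY^n}(\mathcal{D}_n)$. The rate constraints enter only through two combinatorial facts about $\mathcal{D}_n$: for each value of $M_2$ the map $m_1\mapsto\hat X^n$ is injective on the correct set (this carries the bound $\frac1n\log|\cM_1^{(n)}|\le R_1$), and the preimages of $\varphi_2$ partition $\cY^n$ into at most $|\cM_2^{(n)}|$ cells (this carries $\frac1n\log|\cM_2^{(n)}|\le R_2$).

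Next, for each $(\mu,\alpha)\in[0,1]^2$ I would bring in the tilted potential $\tau_{\tU\tX\tY|XY}^{(\mu,\alpha)}$ of the statement together with the auxiliary variable $U$ formed from $M_2$ and a uniform time-sharing index via the standard Markov-making construction, which automatically yields $U-Y-X$, the constraint $P_{\tX|\tY}=P_{X|Y}$, and $|\cU|\le|\cY|$. A Chernoff/Cram\'er estimate for $\sum_i\tau^{(\mu,\alpha)}(X_i,Y_i|U_i)$ restricted to $\mathcal{D}_n$, combined with the two combinatorial facts and a small number of union bounds (over atypical empirical types and over the two rate events), produces a one-shot inequality $P_{\textnormal{c}}(\Phi_n|P_{X^nY^n})\le 5\exp(-n\Lambda_n)$ in which $\Lambda_n$ is not yet single-letter because the change of measure leaves a memory term.

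Finally I would strip off the memory by Oohama's self-improving recursion: feeding the one-shot bound back into itself (equivalently, exploiting sub-additivity of the relevant $n$-letter quantity) gives an affine inequality relating $\Lambda_n$ to its block-halved version, to $\Omega^{(\mu,\alpha)}(P_{XY})$, and to $\alpha((1-\mu)R_1+\mu R_2)$; iterating and letting $n\to\infty$ drives $\Lambda_n$ to the fixed point of that relation, which is exactly $f_{\textnormal{o}}^{(\mu,\alpha)}(R_1,R_2|P_{XY})$, with $2+\alpha(1-\mu)$ the denominator of the fixed-point equation. Taking the supremum over $(\mu,\alpha)\in[0,1]^2$ gives $F_{\textnormal{o}}$, and the constant $5$ is the accumulated cost of the union bounds, which is not inflated by the recursion. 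The hard part is this last step: guaranteeing that the slack introduced at each stage is genuinely $o(1)$ and that the limit is the stated fixed point with the correct denominator rather than a strictly weaker value; for that bookkeeping I would follow \cite{Oo19} line by line.
\end{IEEEproof}
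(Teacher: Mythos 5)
The paper offers no proof of this proposition: it is stated as Oohama's result and supported only by the citation to \cite{Oo19}, which is exactly the posture you take, so your treatment matches the paper's. Your sketch of the recursive method is a reasonable outline of what \cite{Oo19} actually does (tilted potential, auxiliary variable from $M_2$ plus time sharing, union bounds yielding the constant $5$, and a self-improving recursion whose fixed point produces the denominator $2+\alpha(1-\mu)$); the only inaccuracy is describing the recursion as relating $\Lambda_n$ to a ``block-halved version''--Oohama's recursion peels off one letter at a time rather than halving blocks--but since you explicitly defer that bookkeeping to \cite{Oo19}, this does not affect the substance of your proposal.
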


\begin{rem}
    Oohama showed that $F_\textnormal{o}(R_1,R_2|P_{XY})$ is strictly positive if $(R_1,R_2)$ is outside the rate region $\mathcal{R}_{\textnormal{WAK}}$.
\end{rem}

\subsection{Application to expressions to the single-user network}
\begin{figure}[h]
    \centering
    \includegraphics[scale=0.8]{./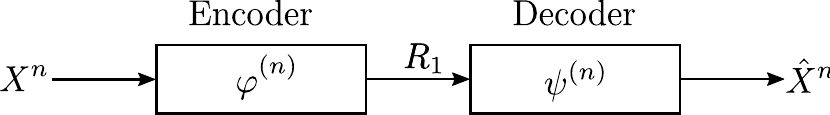}
    \caption{Single-user network\protect\label{fig:single}}
\end{figure}

In this study, we consider two expressions for the single-user network (Fig. \ref{fig:single}).
This network corresponds to the WAK network that the encoder $\varphi_2$ sends nothing or $|\cY| = 1$.
In this case,  we can simplify Oohama's bound (Corollary \ref{col:single_bound}) and our exponent (Corollary \ref{col:single_exponent}) as follows.

\begin{col}\label{col:single_bound}
    For $(\mu,\alpha) \in [0,1]^2$, define
    \begin{align*}
        \hat\tau_{X}^{(\mu,\alpha)}(x) &= \alpha(1-\mu)\log\frac{1}{P_X(x)}, \\
        \hat\Omega^{(\mu,\alpha)}(P_X) &= -\log \sum_x P_X(x) \exp(-\hat\tau_X^{(\mu,\alpha)}(x)), \\
        \hat f_\textnormal{o}^{(\mu,\alpha)}(R_1|P_X) &= \frac{\hat\Omega^{(\mu,\alpha)}(P_X)-\alpha(1-\mu)R_1}{2+\alpha(1-\mu)} \\
        \hat F_\textnormal{o}(R_1|P_X) &= \max_{(\mu,\alpha)\in[0,1]^2} \hat f_\textnormal{o}^{(\mu,\alpha)}(R_1|P_X).
    \end{align*}
    Then, for any $P_{XY}$, we have
    \begin{align}
    F_\textnormal{o}(R_1,R_2|P_{XY}) = \hat F_\textnormal{o}(R_1|P_X).
    \end{align}
\end{col}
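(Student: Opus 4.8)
The plan is to specialize Oohama's parametric bound (Proposition~\ref{thm:oohama_bound}) to the single-user network, where $|\cY|=1$. The key structural simplification is that $|\cU|\le|\cY|=1$ forces $\tU$ (as well as $\tY$, and $Y$, $U$) to be deterministic: then the Markov chain $\tU-\tY-\tX$ in the definition of $\tilde{\mathcal{P}}(P_{XY})$ holds automatically, and the constraint $P_{X|Y}=P_{\tX|\tY}$ collapses to $P_{\tX}=P_X$, so $\tilde{\mathcal{P}}(P_{XY})$ is a singleton and the minimization defining $\Omega^{(\mu,\alpha)}(P_{XY})$ is vacuous. Next I would simplify $\tau_{\tU\tX\tY|XY}^{(\mu,\alpha)}$ for this unique distribution: since $P_{\tY}$, $P_Y$, and $P_{\tY|\tU}(\cdot|u)$ are all the point mass on the unique symbol of $\cY$, the term $(1-\alpha)\log\frac{P_{\tY}(y)}{P_Y(y)}$ and the term $\mu\log\frac{P_{\tY|\tU}(y|u)}{P_Y(y)}$ both vanish, while $P_{\tX|\tU}(x|u)=P_{\tX}(x)=P_X(x)$; hence $\tau_{\tU\tX\tY|XY}^{(\mu,\alpha)}(x,y|u)=\alpha(1-\mu)\log\frac{1}{P_X(x)}=\hat\tau_X^{(\mu,\alpha)}(x)$, and therefore $\Omega^{(\mu,\alpha)}(P_{XY})=\hat\Omega^{(\mu,\alpha)}(P_X)$.

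Plugging $\Omega^{(\mu,\alpha)}(P_{XY})=\hat\Omega^{(\mu,\alpha)}(P_X)$ into the definition of $f_\textnormal{o}^{(\mu,\alpha)}$ and comparing with $\hat f_\textnormal{o}^{(\mu,\alpha)}$ yields
\begin{align*}
f_\textnormal{o}^{(\mu,\alpha)}(R_1,R_2|P_{XY})=\hat f_\textnormal{o}^{(\mu,\alpha)}(R_1|P_X)-\frac{\alpha\mu R_2}{2+\alpha(1-\mu)},
\end{align*}
so the only discrepancy between the two parametric families is the term $\alpha\mu R_2/(2+\alpha(1-\mu))$, which is nonnegative because $R_2\ge0$. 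The inequality $F_\textnormal{o}(R_1,R_2|P_{XY})\le\hat F_\textnormal{o}(R_1|P_X)$ then follows by dropping this nonnegative term pointwise in $(\mu,\alpha)$ and taking the supremum over $[0,1]^2$.

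For the reverse inequality I would restrict the supremum defining $F_\textnormal{o}$ to the slice $\mu=0$, on which the discrepancy term vanishes and $f_\textnormal{o}^{(0,\alpha)}(R_1,R_2|P_{XY})=\hat f_\textnormal{o}^{(0,\alpha)}(R_1|P_X)$. It then suffices to show that this slice already exhausts $\hat F_\textnormal{o}(R_1|P_X)$, and this is where the last small observation comes in: $\hat\tau_X^{(\mu,\alpha)}$ (hence $\hat\Omega^{(\mu,\alpha)}(P_X)$, hence $\hat f_\textnormal{o}^{(\mu,\alpha)}(R_1|P_X)$) depends on $(\mu,\alpha)$ only through the single parameter $s=\alpha(1-\mu)$, whose range as $(\mu,\alpha)$ varies over $[0,1]^2$ is exactly $[0,1]$, attained by $(\mu,\alpha)=(0,s)$; thus $\hat F_\textnormal{o}(R_1|P_X)=\sup_{\alpha\in[0,1]}\hat f_\textnormal{o}^{(0,\alpha)}(R_1|P_X)$. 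Combining, $F_\textnormal{o}(R_1,R_2|P_{XY})\ge\sup_{\alpha\in[0,1]}f_\textnormal{o}^{(0,\alpha)}(R_1,R_2|P_{XY})=\hat F_\textnormal{o}(R_1|P_X)$, which together with the upper bound gives the claimed equality. There is no genuine obstacle here, since the argument is a chain of substitutions; the only point requiring a little care is the bookkeeping of the $\alpha\mu R_2$ term, which can only help for the upper bound (thanks to $R_2\ge0$) and is harmlessly eliminated for the lower bound by the choice $\mu=0$, the reparametrization $s=\alpha(1-\mu)$ ensuring that no part of the supremum is lost in doing so.
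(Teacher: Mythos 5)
Your proposal is correct, and it follows the route the paper leaves implicit (no separate proof of this corollary is given there, since it is a direct specialization of Proposition~\ref{thm:oohama_bound} to $|\cY|=1$): with $|\cU|\le|\cY|=1$ the set $\tilde{\mathcal{P}}(P_{XY})$ collapses to the single distribution with $P_{\tX}=P_X$, the $\log(P_{\tY}/P_Y)$ and $\log(P_{\tY|\tU}/P_Y)$ terms vanish, and $\tau^{(\mu,\alpha)}$ reduces to $\hat\tau_X^{(\mu,\alpha)}$, so $\Omega^{(\mu,\alpha)}(P_{XY})=\hat\Omega^{(\mu,\alpha)}(P_X)$. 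Your handling of the leftover $\alpha\mu R_2/(2+\alpha(1-\mu))$ term --- dropping it via $R_2\ge 0$ for one direction and eliminating it via $\mu=0$ together with the reparametrization $s=\alpha(1-\mu)\in[0,1]$ (which shows the $\mu=0$ slice already exhausts $\hat F_\textnormal{o}$) for the other --- is exactly the bookkeeping needed, and is even simpler in the reading where $\varphi_2$ sends nothing, since then $R_2=0$ and the two families coincide pointwise.
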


Taking $|\cY| = 1$ in Corollary \ref{thm:fs}, we have the following corollary.
\begin{col}\label{col:single_exponent}
    Set
    \begin{align}
        \hat F_\textnormal{t}(R_1|P_X) = \min_{P_{\tX}\in\mathcal{P}(\cX)} \biggl\{
        D(P_{\tX}||P_X)+\left|H(\tX)-R_1\right|^+
        \biggr\}.
    \end{align}
    Then, for any $P_{XY}$, we have
    \begin{align}
    F(R_{1},R_{2}|P_{XY}) = \hat F_\textnormal{t}(R_1|P_X).
    \end{align}
\end{col}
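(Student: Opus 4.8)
The plan is to establish the identity in two complementary ways. The quick route goes through Corollary \ref{thm:fs}: when $|\cY|=1$ we have $\log|\cY|=0$, so that corollary yields $G(R_1,R_2|P_{XY})=F_\textnormal{NE}(R_1|P_{XY})$ for every $R_2\ge 0$; moreover, since $\tY$ is then almost surely constant, $H(\tX|\tY)=H(\tX)$ and $D(P_{\tX\tY}||P_{XY})=D(P_{\tX}||P_X)$, so \eqref{eq:formula_FS} collapses to $\min_{P_{\tX}}\{D(P_{\tX}||P_X)+|H(\tX)-R_1|^{+}\}=\hat F_\textnormal{t}(R_1|P_X)$, and Theorem \ref{thm:main} turns this into $F(R_1,R_2|P_{XY})=\hat F_\textnormal{t}(R_1|P_X)$. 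Since the single-user network is really the WAK network in which $\varphi_2$ sends nothing, i.e. with $R_2=0$, I would also prove the sharper statement $F(R_1,0|P_{XY})=\hat F_\textnormal{t}(R_1|P_X)$ for an arbitrary $P_{XY}$ (for $|\cY|=1$ the two statements coincide, since $F$ then does not depend on $R_2$, as explained below).

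For the direct argument I would first reformulate the minimand of $F(R_1,0|P_{XY})$. Setting $R_2=0$ in \eqref{eq:formula_R}, the penalty $|I(\tU\land\tY)-R_2|^{+}$ becomes $I(\tU\land\tY)$; combining this with the decomposition \eqref{eq:divandI1} and the chain rule $I(\tU\land\tY)+I(\tU\land\tX|\tY)=I(\tU\land(\tX,\tY))$, the objective becomes $D(P_{\tX\tY}||P_{XY})+I(\tU\land(\tX,\tY))$. For the lower bound, any feasible $P_{\tU\tX\tY}$ satisfies $D(P_{\tX\tY}||P_{XY})=D(P_{\tX}||P_X)+D(P_{\tY|\tX}||P_{Y|X}|P_{\tX})\ge D(P_{\tX}||P_X)$, while the chain rule gives $I(\tU\land(\tX,\tY))\ge I(\tU\land\tX)=H(\tX)-H(\tX|\tU)\ge H(\tX)-R_1$; as $I(\tU\land(\tX,\tY))\ge 0$ as well, this yields $I(\tU\land(\tX,\tY))\ge|H(\tX)-R_1|^{+}$, and minimizing the sum over $P_{\tX}$ gives $F(R_1,0|P_{XY})\ge\hat F_\textnormal{t}(R_1|P_X)$. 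For the upper bound, given any $P_{\tX}$ I would take $P_{\tU\tX\tY}=P_{\tX}P_{\tU|\tX}P_{Y|X}$ — so that $\tU-\tX-\tY$ is Markov, $P_{\tX\tY}=P_{\tX}P_{Y|X}$, hence $D(P_{\tX\tY}||P_{XY})=D(P_{\tX}||P_X)$ and $I(\tU\land(\tX,\tY))=I(\tU\land\tX)$ — with $P_{\tU|\tX}$ chosen so that $H(\tX|\tU)=\min\{H(\tX),R_1\}$; then the constraint $R_1\ge H(\tX|\tU)$ holds, $|\cU|$ may be taken $\le|\cX|+1\le|\cX||\cY|+2$, and $I(\tU\land\tX)=H(\tX)-\min\{H(\tX),R_1\}=|H(\tX)-R_1|^{+}$, so this $P_{\tU\tX\tY}$ has objective value $D(P_{\tX}||P_X)+|H(\tX)-R_1|^{+}$; minimizing over $P_{\tX}$ gives $F(R_1,0|P_{XY})\le\hat F_\textnormal{t}(R_1|P_X)$, which together with the lower bound proves the identity.

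Finally, when $|\cY|=1$ every $\tY$ is almost surely constant, so $I(\tU\land\tY)\equiv 0$ and the penalty $|I(\tU\land\tY)-R_2|^{+}=|{-R_2}|^{+}$ vanishes for every $R_2\ge 0$; hence $F(R_1,R_2|P_{XY})=F(R_1,0|P_{XY})=\hat F_\textnormal{t}(R_1|P_X)$, the claimed equality. The only step that is not routine manipulation with chain rules is exhibiting the auxiliary variable $\tU$ attaining $H(\tX|\tU)=\min\{H(\tX),R_1\}$ in the upper bound; this is a standard intermediate-value argument, interpolating between $\tU$ constant (where $H(\tX|\tU)=H(\tX)$) and $\tU=\tX$ (where $H(\tX|\tU)=0$), and one can sidestep it altogether by using $\tU$ constant to obtain $F(R_1,0|P_{XY})\le\min_{P_{\tX}:\,H(\tX)\le R_1}D(P_{\tX}||P_X)$ and then invoking the identity $\min_{P_{\tX}:\,H(\tX)\le R_1}D(P_{\tX}||P_X)=\hat F_\textnormal{t}(R_1|P_X)$, established exactly as in the proof of Corollary \ref{thm:fs} via \cite[Problem 2.6]{CK11}.
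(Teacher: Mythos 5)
Your quick route coincides with the paper's own proof: the authors obtain Corollary \ref{col:single_exponent} precisely by taking $|\cY|=1$ in Corollary \ref{thm:fs} (so the hypothesis $R_2\ge\log|\cY|=0$ is vacuous), observing that $F_\textnormal{NE}$ collapses to $\hat F_\textnormal{t}$ because $\tY$ is constant, and passing between $G$ and $F$ via Theorem \ref{thm:main}. Your second argument is correct and genuinely different (and slightly stronger): working only with the optimization defining $F$, you prove $F(R_1,0|P_{XY})=\hat F_\textnormal{t}(R_1|P_X)$ for an arbitrary alphabet $\cY$, by rewriting the objective at $R_2=0$ as $D(P_{\tX\tY}||P_{XY})+I(\tU\land(\tX,\tY))$, using $H(\tX|\tU)\le R_1$ together with $D(P_{\tX\tY}||P_{XY})\ge D(P_{\tX}||P_X)$ for the lower bound, and a Markov test channel $\tU-\tX-\tY$ (or constant $\tU$ plus the \cite[Problem 2.6]{CK11}-type identity already invoked in the proof of Corollary \ref{thm:fs}) for the upper bound; this buys a coding-free verification that does not route through $G$, at the price of an extra construction. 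Two minor points to tighten: in the upper bound restrict to $P_{\tX}$ absolutely continuous with respect to $P_X$ (otherwise $P_{Y|X}(\cdot|x)$ is undefined off the support of $P_X$; such $P_{\tX}$ have infinite divergence and are irrelevant to the minimum), and the choice of $P_{\tU|\tX}$ achieving $H(\tX|\tU)=\min\{H(\tX),R_1\}$ with $|\cU|\le|\cX|+1$ should be made explicit (e.g.\ $\tU=\tX$ with probability $\lambda$ and a fixed constant symbol otherwise, with $H(\tX|\tU)=(1-\lambda)H(\tX)$ continuous in $\lambda$); your fallback via the constant-$\tU$ bound sidesteps both issues, so neither is a real gap.
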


\subsection{Comparison}
In this section, we compare two expressions in Corollary \ref{col:single_bound} and Corollary \ref{col:single_exponent}.
If we can verify Proposition \ref{pro:main_result} in the following, our exponent is strictly tight for the single-user network.

\begin{pro}\label{pro:main_result}
    For the special case described in Fig. \ref{fig:single} and $R_1 < H(X)$, we have
    \begin{align}
    \hat F_\textnormal{o}(R_1|P_X) < \hat F_\textnormal{t}(R_1|P_X).
    \end{align}
\end{pro}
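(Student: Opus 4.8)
The plan is to rewrite $\hat F_\textnormal{o}$ as a one-parameter maximum, to produce a matching one-parameter family of lower bounds on $\hat F_\textnormal{t}$, and then to compare the two, exploiting that the two families differ only through an extra $+1$ in a denominator. First I would simplify $\hat F_\textnormal{o}$. Observe that $\hat f_\textnormal{o}^{(\mu,\alpha)}(R_1|P_X)$ depends on $(\mu,\alpha)\in[0,1]^2$ only through $s:=\alpha(1-\mu)$, which ranges over all of $[0,1]$. Since $\hat\tau_X^{(\mu,\alpha)}(x)=-s\log P_X(x)$, we get $\exp(-\hat\tau_X^{(\mu,\alpha)}(x))=P_X(x)^{s}$, hence $\hat\Omega^{(\mu,\alpha)}(P_X)=-\log\sum_x P_X(x)^{1+s}$, and therefore, writing $n(s):=-\log\sum_x P_X(x)^{1+s}-sR_1$,
\[
\hat F_\textnormal{o}(R_1|P_X)=\max_{s\in[0,1]}\frac{n(s)}{2+s}.
\]

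Next I would produce matching lower bounds for $\hat F_\textnormal{t}$. For every $\lambda\in[0,1]$ and every $Q\in\mathcal{P}(\cX)$ one has $|H(Q)-R_1|^+\ge\lambda(H(Q)-R_1)$, so
\[
\hat F_\textnormal{t}(R_1|P_X)\ \ge\ \min_{Q\in\mathcal{P}(\cX)}\bigl\{D(Q||P_X)+\lambda H(Q)\bigr\}-\lambda R_1.
\]
For $\lambda<1$ the bracket equals $(1-\lambda)\sum_xQ(x)\log Q(x)-\sum_xQ(x)\log P_X(x)$, which is convex in $Q$; taking the tilted law $Q^*(x)\propto P_X(x)^{1/(1-\lambda)}$, the inequality $D(Q||Q^*)\ge 0$ yields $D(Q||P_X)+\lambda H(Q)\ge-(1-\lambda)\log\sum_x P_X(x)^{1/(1-\lambda)}$ with equality at $Q^*$. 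Hence $\hat F_\textnormal{t}(R_1|P_X)\ge-(1-\lambda)\log\sum_x P_X(x)^{1/(1-\lambda)}-\lambda R_1$ for every $\lambda\in[0,1)$.

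Now I would close the gap. Let $s_o$ attain the maximum defining $\hat F_\textnormal{o}(R_1|P_X)$. Since $R_1<H(X)$ we have $\hat F_\textnormal{o}(R_1|P_X)>0$: either by the Remark following Proposition \ref{thm:oohama_bound} together with Corollary \ref{col:single_bound} (for the single-user network $(R_1,R_2)$ is outside $\mathcal{R}_{\textnormal{WAK}}$ precisely when $R_1<H(X)$), or directly from $n(0)=0$ and $n'(0)=H(X)-R_1>0$. Because the objective $n(s)/(2+s)$ vanishes at $s=0$, this forces $s_o>0$, and hence $n(s_o)=(2+s_o)\hat F_\textnormal{o}(R_1|P_X)>0$. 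Choosing $\lambda_o:=s_o/(1+s_o)\in(0,1)$, so that $1-\lambda_o=1/(1+s_o)$ and $1/(1-\lambda_o)=1+s_o$, the bound of the previous paragraph specializes to
\[
\hat F_\textnormal{t}(R_1|P_X)\ \ge\ \frac{-\log\sum_x P_X(x)^{1+s_o}-s_oR_1}{1+s_o}\ =\ \frac{n(s_o)}{1+s_o}\ >\ \frac{n(s_o)}{2+s_o}\ =\ \hat F_\textnormal{o}(R_1|P_X),
\]
where the strict step is simply $0<1+s_o<2+s_o$ combined with $n(s_o)>0$. This is exactly the asserted inequality.

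I expect the only nonroutine point to be the tilting identity $\min_Q\{D(Q||P_X)+\lambda H(Q)\}=-(1-\lambda)\log\sum_x P_X(x)^{1/(1-\lambda)}$, which is dispatched in one line by the $D(Q||Q^*)\ge 0$ argument once one notes that the minimand is convex for $\lambda<1$ (the boundary case $\lambda=1$ never arises because $\lambda_o<1$). The second thing to be careful about is that the maximizer $s_o$ is strictly positive; this is precisely where the hypothesis $R_1<H(X)$ enters, and it is what promotes the harmless denominator inequality $2+s_o>1+s_o$ to the \emph{strict} separation $\hat F_\textnormal{o}(R_1|P_X)<\hat F_\textnormal{t}(R_1|P_X)$. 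Everything else is routine bookkeeping with the substitutions $s=\alpha(1-\mu)$ and $\lambda=s/(1+s)$.
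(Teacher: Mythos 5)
Your proof is correct and follows essentially the same route as the paper's: rewrite $\hat F_\textnormal{o}$ as a one-parameter maximum of $n(s)/(2+s)$, match it against the parametric form of $\hat F_\textnormal{t}$ with denominator $1+s$ (equivalently $1-\theta$ with $\theta=-s$, as in Lemma~\ref{lem:convert}), and use $R_1<H(X)$ to force strict positivity of the numerator at the optimizer so that the denominator gap $2+s_o>1+s_o$ yields a strict inequality. The only difference is that you re-derive the needed ``$\geq$'' half of Lemma~\ref{lem:convert} from scratch via the linearization $|H(Q)-R_1|^+\geq\lambda(H(Q)-R_1)$ and a tilting argument, whereas the paper cites it from \cite[Exercise 2.41]{Hay06}; this makes your argument self-contained but does not change the approach.
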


As mentioned in Introduction, two expressions are different in characterization.
Since we cannot compare them as they are, we convert our exponent using \cite[Exercise 2.41]{Hay06}, which is stated in the following.

\begin{lem}\label{lem:convert}
    We can rewrite $\hat F_{\textnormal{t}}(R_1|P_X)$ as
    \begin{align}
        \hat F_\textnormal{t}(R_1|P_X) &= \max_{\theta \leq 0}\frac{-s(\theta)+\theta R_1}{1-\theta},
    \end{align}
    where the function $s(\theta)$ is denoted as
    \begin{align*}
    s(\theta) = \log \sum_x \{P_X(x)\}^{1-\theta}.
    \end{align*}
\end{lem}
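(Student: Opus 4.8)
\emph{Proof plan.} This identity is a Fenchel-type duality (it is \cite[Exercise 2.41]{Hay06}); the plan is to prove it by the two matching inequalities, using the tilted (Gibbs) distributions $Q_{\theta}(x):=\{P_X(x)\}^{1-\theta}/2^{s(\theta)}$, $\theta\le 0$, as the bridge between the minimization in the definition of $\hat F_\textnormal{t}$ and the parametric form on the right. We may assume $P_X$ has full support (otherwise restrict $\cX$ to $\supp P_X$). Writing $A_{\theta}:=-\sum_x Q_{\theta}(x)\log P_X(x)\ge 0$, the relation $\log Q_{\theta}(x)=(1-\theta)\log P_X(x)-s(\theta)$ gives, in two lines, the identities $H(Q_{\theta})=(1-\theta)A_{\theta}+s(\theta)$ and $D(Q_{\theta}\|P_X)=\theta A_{\theta}-s(\theta)$; eliminating $A_{\theta}$ shows that whenever $H(Q_{\theta})=R_1$ one has $D(Q_{\theta}\|P_X)=(-s(\theta)+\theta R_1)/(1-\theta)$. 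These computations are the only arithmetic needed.

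For the inequality ``$\ge$'': fix any $P_{\tX}\in\mathcal{P}(\cX)$ and any $\theta\le 0$, set $\mu:=-\theta/(1-\theta)\in[0,1]$, multiply the target inequality through by $1-\theta>0$, and bound $|H(\tX)-R_1|^{+}\ge\mu(H(\tX)-R_1)$. Collecting terms (using $(1-\theta)\mu=-\theta$), the left-hand side becomes $\sum_x P_{\tX}(x)\log\frac{P_{\tX}(x)}{\{P_X(x)\}^{1-\theta}}+\theta R_1$, and since $\sum_x P_{\tX}(x)\log\frac{P_{\tX}(x)}{\{P_X(x)\}^{1-\theta}}=D(P_{\tX}\|Q_{\theta})-s(\theta)\ge -s(\theta)$ by nonnegativity of the divergence, this is $\ge -s(\theta)+\theta R_1$. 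Dividing back by $1-\theta$, then minimizing over $P_{\tX}$ and maximizing over $\theta\le 0$, yields $\hat F_\textnormal{t}(R_1|P_X)\ge\sup_{\theta\le 0}\frac{-s(\theta)+\theta R_1}{1-\theta}$.

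For the inequality ``$\le$'': the map $\theta\mapsto H(Q_{\theta})$ is continuous and nondecreasing on $(-\infty,0]$ (this is convexity of the cumulant-type function $s$), with $H(Q_{0})=H(X)$. Hence for every $R_1$ in its range — in particular every $R_1<H(X)$ that is at least $\lim_{\theta\to-\infty}H(Q_{\theta})$ — there is $\theta^{*}\le 0$ with $H(Q_{\theta^{*}})=R_1$; taking $P_{\tX}=Q_{\theta^{*}}$ as a feasible point kills the $|\cdot|^{+}$ term, so $\hat F_\textnormal{t}(R_1|P_X)\le D(Q_{\theta^{*}}\|P_X)=\frac{-s(\theta^{*})+\theta^{*}R_1}{1-\theta^{*}}\le\sup_{\theta\le 0}\frac{-s(\theta)+\theta R_1}{1-\theta}$. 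Combined with ``$\ge$'' this gives the identity on that range. The leftover cases are routine: for $R_1\ge H(X)$ both sides equal $0$ (take $P_{\tX}=P_X$; and for $\theta<0$, $s(\theta)/\theta=H_{1-\theta}(P_X)\le H(X)\le R_1$ makes every parametric value nonpositive), while the degenerate sub-range below $\lim_{\theta\to-\infty}H(Q_{\theta})$ — nonempty only if $P_X$ has several maximizers — follows from continuity of both sides in $R_1$ (and there the supremum is attained only in the limit, so ``$\max$'' should be read as ``$\sup$'' unless $P_X$ has a unique mode).

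The step deserving the most care is the calibration $H(Q_{\theta^{*}})=R_1$ together with checking $\theta^{*}\le 0$, which rests on monotonicity of $H(Q_{\theta})$; the reparametrization bookkeeping $\mu=-\theta/(1-\theta)$ is the other place to be careful with signs. A slightly different organization avoids guessing $Q_{\theta}$ explicitly: write $|a|^{+}=\max_{0\le\mu\le 1}\mu a$; swap $\min_{P_{\tX}}$ with $\max_{\mu}$ by Sion's minimax theorem, which applies because $D(P_{\tX}\|P_X)+\mu(H(\tX)-R_1)=(1-\mu)\sum_x P_{\tX}(x)\log P_{\tX}(x)-\sum_x P_{\tX}(x)\log P_X(x)-\mu R_1$ is convex (and continuous) in $P_{\tX}$ for $\mu\le 1$ and affine in $\mu$, on the compact convex sets $\mathcal{P}(\cX)$ and $[0,1]$; then solve the inner minimization by Gibbs' inequality to obtain the Gibbs distribution $\propto\{P_X(x)\}^{1/(1-\mu)}$, and the substitution $\theta=-\mu/(1-\mu)$ matches the two expressions term by term.
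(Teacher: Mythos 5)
Your proof is correct in substance, but note that the paper does not prove this lemma at all --- it simply cites \cite[Exercise 2.41]{Hay06} --- so your argument is a genuine addition rather than a variant of the paper's route. The two directions are handled soundly: the ``$\ge$'' direction via $|a|^{+}\ge\mu a$, the reparametrization $\mu=-\theta/(1-\theta)$, and nonnegativity of $D(P_{\tX}\|Q_{\theta})$ is complete and airtight for every $R_1$; the ``$\le$'' direction via calibrating $H(Q_{\theta^{*}})=R_1$ is justified correctly by $\frac{d}{d\theta}H(Q_{\theta})=(1-\theta)s''(\theta)\ge 0$. Your observation that the maximum must be read as a supremum when $P_X$ has several modes and $R_1<\log|\arg\max P_X|$ is correct (e.g.\ for $P_X$ uniform both sides equal $1-R_1$ but the parametric value $(1-R_1)\frac{-\theta}{1-\theta}$ never attains it), and is in fact a refinement of the statement as printed. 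The one step that is genuinely under-argued is dispatching that degenerate sub-range ``by continuity'': two convex functions of $R_1$ that agree on $[\log m,H(X)]$ need not agree below $\log m$, so continuity alone does not close the gap. It does close easily, though: for $R_1<\log m$ one checks $g'(\theta)=\frac{R_1-H(Q_{\theta})}{(1-\theta)^2}<0$, so the supremum is the limit $-\log p_{\max}-R_1$ as $\theta\to-\infty$, which is matched on the left-hand side by taking $P_{\tX}$ uniform on the maximizers of $P_X$; alternatively, the Sion minimax route you sketch handles all cases uniformly and is the cleaner way to organize the whole proof.
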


Using this lemma, we prove the main result, Proposition \ref{pro:main_result}.

\begin{IEEEproof}[Proof of Proposition \ref{pro:main_result}]
    Putting $\theta = -\alpha(1-\mu) \in [-1,0]$, we first see that
    \begin{align}
    \hat\Omega^{(\mu,\alpha)}(P_X) = -s(\theta),
    \end{align}
    which can be derived from
    \begin{align}
        s(\theta) &= \log \sum_x \{P_X(x)\}^{1-\theta} \nonumber\\
        &= \log \left(\sum_x P_X(x)\left(\frac{1}{P_X(x)}\right)^\theta\right) \nonumber\\
        &= \log \left(\sum_x P_X(x) \exp\left(\theta \log\frac{1}{P_X(x)}\right)\right) \nonumber\\
        &= -\hat\Omega^{(\mu,\alpha)}(P_X).
    \end{align}
    Then, we can rewrite $\hat F_\textnormal{o}(R_1|P_X) $ as
    \begin{align}
    \hat F_\textnormal{o}(R_1|P_X) = \max_{\theta \in [-1,0]} \frac{-s(\theta)+\theta R_1}{2-\theta}.
    \end{align}
    Now we show that for some $\theta \in [-1,0)$, the numerator $-s(\theta)+\theta R_1$ is positive, i.e., $-s(\theta)+\theta R_1 > 0$.
    Considering the condition of $R_1$: $R_1 < H(X)$, we rewrite $R_1 \leq H(X)-\delta$ for a small positive number $\delta$.
    Also, from the definition of derivatives, we get
    \begin{align}
        \frac{d}{d\theta}s(0) = \lim_{\theta \to 0} \frac{s(\theta)}{\theta} = H(X).
    \end{align}
    Then, when we take $\theta_0 \in [-1,0)$ sufficiently close to 0, there exists $\delta$ such that
    \begin{align}
        \frac{s(\theta_0)}{\theta_0} \geq H(X)-\frac{\delta}{2},
    \end{align}
    which derives
    \begin{align}
        R_1 \leq H(X)-\delta \leq \frac{s(\theta_0)}{\theta_0}-\frac{\delta}{2} < \frac{s(\theta_0)}{\theta_0}.
    \end{align}
    Therefore, noting that $\theta_0$ is a negative number, we have showed that $-s(\theta)+\theta R_1 > 0$ for some $\theta \in [-1,0)$, which indicates that there exists $\theta_1 \in [-1,0)$ such that
    \begin{align}
    	F_\textnormal{o}(R_1|P_X) = \frac{-s(\theta_1)+\theta_1 R_1}{2-\theta_1}.
    \end{align}
    Thus, we have
    \begin{align}
        F_\textnormal{o}(R_1|P_X) &= \frac{-s(\theta_1)+\theta_1 R_1}{2-\theta_1} \nonumber \\
        &< \frac{-s(\theta_1)+\theta_1 R_1}{1-\theta_1} \nonumber \\
        &\leq \max_{\theta \leq 0}\frac{-s(\theta)+\theta R_1}{1-\theta} \nonumber\\
        &= F_\textnormal{t}(R_1|P_X),
    \end{align}
    which implies Proposition \ref{pro:main_result}.
\end{IEEEproof}

\section{Proof of the Main Theorem} \label{sec:proof}
In this section, we shall prove Theorem \ref{thm:main}.
\subsection{Direct Part}
In this section, we prove the direct part. Since we are interested
in a lower bound on the correct probability $P_{\mathrm{c}}$, it
suffices to construct a code such that sequences $(x^n,y^n)$
in a fixed type class ${\cal T}_{\bX\bY}$ can be decoded
correctly. First, we fix $P_{\tU \tX \tY }$ that attains
the minimum in (\ref{eq:formula_R}) for $(R_{1}-2\epsilon,R_{2})$
where $\epsilon$ is a small positive number. Let $P_{\bU \bX\bY}$
be a joint type satisfying
\begin{align}
d_{\textnormal{var}}(P_{\bU \bX\bY},P_{\tU \tX \tY })\leq\frac{|\cU ||\cX ||\cY |}{n}.
\end{align}
Note that, by the continuity of entropy, we see
\begin{align}
	H(\bX|\bU ) & \leq H(\tX |\tU )+\epsilon \nonumber\\
	& \leq R_{1}-\epsilon,
\end{align}
for sufficiently large $n$. In order to prove the direct part, we
use the following covering lemma.
\begin{lem}
	\label{lem:covering}For a joint type $P_{\bU \bY}$, there
	exists a codebook ${\cal C}=\{u_{1}^n,\ldots,u_{L}^n\}\subseteq{\cal T}_{\bU}$
	such that
	\begin{equation}
		L\le\exp\{nI(\bU \land\bY)+(|\cU ||\cY |+4)\log(n+1)\},\label{eq:convering}
	\end{equation}
	and such that for any $y^n\in\cT _{\bY}$, there exists
	$u^n\in\mathcal{C}$ satisfying $(u^n,y^n)\in\cT _{\bU \bY}$.
\end{lem}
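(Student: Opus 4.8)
The plan is to prove this by the standard random-selection argument for type covering (essentially the type-covering lemma of Csiszár and Körner). Fix the joint type $P_{\bU\bY}$ and let $p$ denote the probability that a sequence $U^n$ drawn uniformly from $\cT_{\bU}$ has joint type $P_{\bU\bY}$ with a fixed $y^n\in\cT_{\bY}$. Such a $U^n$ must lie in the conditional type class of $P_{\bU|\bY}$ given $y^n$, whose cardinality depends only on the types involved and not on the particular representative $y^n\in\cT_{\bY}$; hence $p$ is the ratio of that conditional type-class size to $|\cT_{\bU}|$. The usual counting bounds (the conditional type class has size at least $(n+1)^{-|\cU||\cY|}\exp\{nH(\bU|\bY)\}$, while $|\cT_{\bU}|\le\exp\{nH(\bU)\}$) then give $p\ge(n+1)^{-|\cU||\cY|}\exp\{-nI(\bU\land\bY)\}$.

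Next I would draw $u_1^n,\dots,u_L^n$ independently and uniformly at random from $\cT_{\bU}$. For a fixed $y^n\in\cT_{\bY}$ the probability that no codeword forms the joint type $P_{\bU\bY}$ with it is $(1-p)^L\le e^{-Lp}$, so by a union bound the probability that some $y^n\in\cT_{\bY}$ is left uncovered is at most $|\cT_{\bY}|\,e^{-Lp}$. Choosing $L=\lceil p^{-1}(\ln|\cT_{\bY}|+1)\rceil$ makes this quantity strictly less than $1$, so a deterministic codebook $\mathcal{C}\subseteq\cT_{\bU}$ with the claimed covering property exists.

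It then remains to verify the cardinality bound. Using $|\cT_{\bY}|\le|\cY|^n$ and $1/p\ge 1$, the chosen $L$ satisfies $L\le p^{-1}(\ln|\cT_{\bY}|+2)$, and substituting the lower bound on $p$ gives $L\le(n+1)^{|\cU||\cY|}\exp\{nI(\bU\land\bY)\}\cdot(\ln|\cT_{\bY}|+2)$. Since $\ln|\cT_{\bY}|+2$ is at most polynomial in $n$, for all large $n$ it is bounded by $(n+1)^{2}$, and collecting this together with the rounding into the $(n+1)$-power yields $L\le\exp\{nI(\bU\land\bY)+(|\cU||\cY|+4)\log(n+1)\}$, the generous $+4$ comfortably absorbing all lower-order terms.

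The argument is almost entirely routine; the only point that needs care is making the union bound harmless. The collection $\cT_{\bY}$ that must be covered is exponentially large, but $\ln|\cT_{\bY}|$ is only \emph{linear} in $n$, so it inflates $L$ by a merely polynomial factor, i.e.\ by an additive $O(\log n)$ term in the exponent, which is exactly the slack the $(|\cU||\cY|+4)\log(n+1)$ term is designed to hold. As an alternative to random selection one can argue greedily: a double-counting identity shows that, for any set $S$ of not-yet-covered sequences, some $u^n\in\cT_{\bU}$ covers at least a $p$-fraction of $S$, so after $L$ steps at most $(1-p)^L|\cT_{\bY}|$ sequences remain uncovered, giving the same bound on $L$.
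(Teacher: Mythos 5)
Your proof is correct, and it is essentially the argument the paper relies on: the paper omits the proof of this lemma entirely, deferring to the standard type-covering lemma (citing Lemma 4 of the Watanabe reference), which is proved by exactly this random-selection (equivalently greedy) argument with the counting bounds $p\ge(n+1)^{-|\cU||\cY|}\exp\{-nI(\bU\land\bY)\}$ and a union bound over $\cT_{\bY}$. The only caveat is cosmetic: your absorption of the $\ln|\cT_{\bY}|+2$ factor into the $(n+1)$-powers holds for all sufficiently large $n$ (which is all that is needed downstream, where the correction enters through the vanishing term $\delta_n$), so the slack $(|\cU||\cY|+4)\log(n+1)$ is indeed enough.
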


\begin{IEEEproof}
	Since the strategy of proving this lemma is almost the same as \cite[Lemma 4]{Wat17a},
	we omit it.
\end{IEEEproof}
We describe an overview of code construction. For this coding scheme,
we want to evaluate the correct probability $P_{\textnormal{c}}$. For
$\varphi_{1}$, we use the standard random binning with rate $R_{1}$.
For $\varphi_{2}$, we use the quantization using the test channel
$P_{\bU |\bY}$. However, $R_{2}$ may be smaller than $I(\bU \land\bY)$
that is needed to quantize sequences in $\cT _{\bY}$ via
the test channel $P_{\bU |\bY}$. In that case, $\varphi_{2}$
assigns messages to a part of the quantization codebook, and assign
a prescribed constant message to the remaining. The detail of the
coding scheme is described as follows.

\paragraph{Encoding of $\varphi_{1}$}

Randomly and independently assign an index $m_{1}=F(x^n)\in\{1,2,\dots,2^{nR_{1}}\}$
to each sequence $x^n\in\cX ^n$. The set of sequences
with the same index $m_{1}$ forms a bin $\mathcal{B}(m_{1})$. Observing
$x^n\in\mathcal{B}(m_{1})$, $\varphi_{1}$ sends the bin index
$m_{1}$.

\paragraph{Encoding of $\varphi_{2}$}

Consider a codebook ${\cal C}$ given by Lemma \ref{lem:covering}
and function $f:{\cal T}_{\bY}\to\{1,\ldots,L\}$ such that $(u_{f(y^n)}^n,y^n)\in{\cal T}_{\bU \bY}$.
Without loss of generality, we assume that $|f^{-1}(1)|\ge|f^{-1}(2)|\ge\cdots\ge|f^{-1}(L)|$.
We define a function $\tilde{f}$ as
\begin{align*}
\tilde{f}(y^n)=\begin{cases}
	f(y^n) & (f(y^n)\leq2^{nR_{2}}),\\
	1 & (f(y^n)>2^{nR_{2}}).
\end{cases}
\end{align*}
Then, upon observing $y^n\in{\cal Y}^n$, $\varphi_{2}$ sends
$m_{2}=\tilde{f}(y^n)$ if $y^n\in{\cal T}_{\bY}$ and $m_{2}=1$
otherwise.

\paragraph{Decoding.}

Observing $(m_{1},m_{2})=(\varphi_{1}(x^n),\varphi_{2}(y^n))$,
$\psi$ declares the estimate $\hat{x}^n\in\cT _{\bX|\bU }(u_{m_{2}}^n)$
if it is the unique sequence satisfying $F(\hat{x}^n)=m_{1}$; otherwise
it declares a prescribed sequence.

\paragraph{Analysis of the correct probability.}

We bound the correct probability $P_{\mathrm{c}}$ averaged over random
bin assignments. We lower bound the correct probability quite generously,
and estimate the probability such that the following three events
occur consecutively.
\begin{enumerate}
	\setlength{\leftskip}{-0.1cm}
	\item The event $E_{1}$ such that $y^n\in\cT_{\bY}$ and $\tilde{f}(y^n)=f(y^n)$:
	we denote the probability of this event by $P_{\textnormal{c},1}=\Pr(E_{1})$.
	\item The event $E_{2}$ such that $E_{1}$ occurs and $x^n$ is included
	in the conditional type class ${\cal T}_{\bX|\bU \bY}(u^n_{f(y^n)},y^n)$:
	we denote the conditional probability of this event conditioned that
	$E_{1}$ occurs by $P_{\textnormal{c},2}=\Pr(E_{2}|E_{1})$.
	\item The event $E_{3}$ such that $E_{1}\wedge E_{2}$ occurs and there
	exists no sequence $\hat{x}^n\in{\cal T}_{\bX|\bU }(u_{f(y^n)}^n)$
	satisfying $\hat{x}^n\neq x^n$ and $F(\hat{x}^n)=F(x^n)$:
	we denote the conditional probability of this event conditioned on
	the first and second events by $P_{\textnormal{c},3}=\Pr(E_{3}|E_{1}\wedge E_{2})$.
\end{enumerate}
By noting that ${\cal T}_{\bX|\bU \bY}(u_{f(y^n)}^n,y^n)\subseteq{\cal T}_{\bX|\bU }(u_{f(y^n)}^n)$,
we can verify that the decoder $\psi$ outputs the correct source
sequence $x^n$ if the three events occur $E_{1},E_{2}$, and $E_{3}$
consecutively. Thus, we can lower bound the correct probability as
\begin{align}
P_{\textnormal{c}}\geq\prod_{i=1}^{3}P_{\textnormal{c},i}.
\end{align}
In the following, we shall calculate each probability $P_{\textnormal{c},i}$.

When $L\le2^{nR_{2}}$, the second condition $\tilde{f}(y^n)=f(y^n)$
of the event $E_{1}$ is always satisfied. On the other hand, by noting
that $|f^{-1}(i)|$ is sorted in descending order, when $L>2^{nR_{2}}$,
we have
\begin{align}
\frac{1}{2^{nR_{2}}}\sum_{i=1}^{2^{nR_{2}}}|f^{-1}(i)|\geq\frac{1}{L}\sum_{i=1}^{L}|f^{-1}(i)|=\frac{|{\cal T}_{\bY}|}{L},
\end{align}
which implies
\begin{align}
\sum_{i=1}^{2^{nR_{2}}}\frac{|f^{-1}(i)|}{|{\cal T}_{\bY}|}\geq\frac{2^{nR_{2}}}{L}.
\end{align}
Therefore, by using \cite[Lemma 2.6]{CK11}, we have
\begin{align}
	P_{\textnormal{c},1} & =\Pr\{Y^n\in{\cal T}_{\bY}\land[f(Y^n)\leq2^{nR_{2}}]\}\nonumber \\
	& =P_{Y^n}({\cal T}_{\bY})\sum_{i=1}^{2^{nR_{2}}}\frac{|f^{-1}(i)|}{|{\cal T}_{\bY}|}\nonumber \\
	& \geq P_{Y^n}({\cal T}_{\bY})\min\left\{1,\frac{2^{nR_{2}}}{L}\right\}\nonumber \\
	& \geq(n+1)^{-|\cY |}2^{-n(D(P_{\bY}||P_{Y})+|(1/n)\log L-R_{2}|^{+})}\nonumber \\
	& \geq(n+1)^{-|\cY |}2^{-n(D(P_{\bY}||P_{Y})+|I(\bU \land\bY)-R_{2}+\delta_{n}|^{+})},
\end{align}
where $\delta_{n}=(1/n)(|\cU ||\cY |+4)\log(n+1)$.

By using \cite[Lemma 2.6]{CK11}, we have
\begin{align}
	P_{\textnormal{c},2} & =\sum_{\substack{
			y^n\in{\cal T}_{\bY}\\
			\tilde{f}(y^n)=f(y^n)
	}}\frac{1}{K}P_{X^n|Y^n}({\cal T}_{\bX|\bU \bY}(u_{f(y^n)}^n,y^n))\nonumber \\
	& \geq(n+1)^{-|\cU ||\cX ||\cY |}2^{-nD(P_{\bX|\bU \bY}||P_{X|Y}|P_{\bU \bY})},
\end{align}
where $K = \sum_{i=1}^{2^{nR_2}} |f^{-1}(i)|$ is the number of $y^n\in{\cal T}_{\bY}$ satisfying
$\tilde{f}(y^n)=f(y^n)$. Also, since the decoding method searches the unique sequence $\hat{x}^n$ among $\cT_{\tX|\tU}(u^n_{f(y^n)})$ such that the bin index coincides with $x^n$, $P_{\textnormal{c},3}$ can be evaluated as
\begin{align}
	P_{\textnormal{c},3} & =1-\sum_{(x^n,y^n)}\Pr\{(X^n,Y^n)=(x^n,y^n)|E_{1}\land E_{2}\}\nonumber \\
	& \hphantom{-----.-}\cdot\Pr\{\exists\hat{x}^n\neq x^n ~~\mathrm{s.t.}~~ F(\hat{x}^n)=F(x^n),\hat{x}^n\in\cT _{\bX|\bU }(u_{f(y^n)}^n)\}\nonumber \\
	& \geq1-\sum_{(x^n,y^n)}\Pr\{(X^n,Y^n)=(x^n,y^n)|E_{1}\land E_{2}\}\cdot\sum_{\substack{
			\hat{x}^n\in\cT _{\bX|\bU }(u_{f(y^n)}^n)\\
			\hat{x}^n\neq x^n}}
	\Pr\{F(\hat{x}^n)=F(x^n)\}\nonumber \\
	& \geq1-\sum_{(x^n,y^n)}\Pr\{(X^n,Y^n)=(x^n,y^n)|E_{1}\land E_{2}\}\cdot\sum_{\substack{
			\hat{x}^n\in\cT _{\bX|\bU }(u_{f(y^n)}^n)\\
			\hat{x}^n\neq x^n}}
	2^{-nR_1}\nonumber \\
	& \geq1-2^{-n(R_{1}-H(\bX|\bU))}\nonumber \\
	& \geq1-2^{-n\epsilon},
\end{align}
where we use \cite[Lemma 2.5]{CK11}, and a condition $R_{1}-\epsilon\geq H(\bX|\bU )$
in the last inequality. Therefore, $P_{\textnormal{c}}$ can be evaluated
as
\begin{align}
	P_{\textnormal{c}} &\geq\prod_{i=1}^{3}P_{\textnormal{c},i}\nonumber \\
	& \geq(n+1)^{-|\cY |}2^{-n(D(P_{\bY}||P_{Y})+|I(\bU \land\bY)-R_{2}+\delta_{n}|^{+})}\nonumber \\
	& \quad\cdot(n+1)^{-|\cU ||\cX ||\cY |}2^{-nD(P_{\bX|\bU \bY}||P_{X|Y}|P_{\bU \bY})}(1-2^{-n\epsilon})\nonumber \\
	& =(n+1)^{-(|\cU ||\cX |+1)|\cY |}(1-2^{-n\epsilon})\nonumber \\
	& \quad\cdot2^{-n(D(P_{\bU \bX\bY}||P_{\bU |\bY}P_{XY})+|I(\bU \land\bY)-R_{2}+\delta_{n}|^{+})},\label{eq:d_r1}
\end{align}
where we used the following equality
\begin{align*}
	&D(P_{\bU\bX\bY}||P_{\bU|\bY}P_{XY}) = D(P_{\bX|\bU\bY}||P_{X|Y}|P_{\bU\bY})+D(P_{\bY}||P_Y),
\end{align*}
which is derived by the definition of the conditional KL-divergence.

Thus, by rearranging (\ref{eq:d_r1}), we have
\begin{align}
	\frac{1}{n}\log\frac{1}{P_{\textnormal{c}}}\leq D(P_{\bU \bX\bY}||P_{\bU |\bY}P_{XY})\!+\!|I(\bU \land\bY)-R_{2}+\delta_{n}|^{+} \nonumber \\
	+(|\cU ||\cX |+1)|\cY |\frac{\log(n+1)}{n}\!+\!\frac{1}{n}\log\frac{1}{1-2^{-n\epsilon}}.
\end{align}

Consequently, by taking the limit of $n$ and by the continuity of
the information quantities, we obtain
\begin{align}
G(R_{1},R_{2}|P_{XY}) \leq F(R_{1}-2\epsilon,R_{2}|P_{XY}).
\end{align}
Finally, by the continuity of $F$ with respect to $R_{1}$, we obtain
the desired result.

\subsection{Converse Part}
In this section, we prove the converse part by using the change of measure argument.
Define $\mathcal{C}=\{(x^n,y^n):\psi(\varphi_{1}(x^n),\varphi_{2}(y^n))=x^n\}$ as the set of sequences without error, and define the changed probability measure
\begin{align}
	P_{\tX ^n\tY ^n}(x^n,y^n) & =\Pr\{(X^n,Y^n)\!=\!(x^n,y^n):(X^n,Y^n)\!\in\!\mathcal{C}\}\nonumber \\
	& =\frac{P_{X^nY^n}(x^n,y^n)}{P_{X^nY^n}(\mathcal{C})}\1[(x^n,y^n)\in\mathcal{C}],
\end{align}
where $\1$ is the indicator function. Then, we can verify
\begin{equation}
    D(P_{\tX ^n\tY ^n}||P_{X^nY^n})=\log\frac{1}{P_{\textnormal{c}}}.\label{eq:c_basic}
\end{equation}

First, define random variables $\tMo=\varphi_{1}(\tX ^n)$ and $\tMt=\varphi_{2}(\tY ^n)$.
Since $nR_{1}\geq\log|\mathcal{C}|\geq H(\tMo)$, we can write a chain of inequalities
\begin{align}
	nR_{1} & \geq H(\tMo)\nonumber \\
	& \geq H(\tMo|\tMt)\nonumber \\
	& =H(\tMo|\tMt)-H(\tMo|\tMt,\tX ^n)\nonumber \\
	& =I(\tMo\land\tX ^n|\tMt)\nonumber \\
	& =H(\tX ^n|\tMt)-H(\tX ^n|\tMo,\tMt)\nonumber \\
	& =H(\tX ^n|\tMt). \label{eq:R1_1}
\end{align}
Here, note that $H(\tMo|\tMt,\tX ^n)=0$ since $\tMo$ is generated from $\tX ^n$; and $H(\tX ^n|\tMo,\tMt)=0$ since $\psi$ observes $\tMo$ and $\tMt$, then outputs the error-free estimation of $\tX ^n$.
Also, we consider $H(\tX ^n|\tMt)$.
Denoting $X_{j}^{-}=(X_{1},\dots,X_{j-1})$, we obtain
\begin{align}
	H(\tX ^n|\tMt) & =\sum_{j=1}^nH(\tX _{j}|\tMt,\tX _{j}^{-})\nonumber \\
	& \geq\sum_{j=1}^nH(\tX _{j}|\tMt,\tX _{j}^{-},\tY _{j}^{-})\nonumber \\
	& =nH(\tX _{J}|\tU _{J},J),\label{R1_2}
\end{align}
where $J\sim\mathtt{unif}(\{1,\dots,n\})$ is the time-sharing random variable and is assumed to be independent of all the other random variables involved.
Also, we put $\tU _{J}=(\tMt,\tX _{J}^{-},\tY _{J}^{-})$.

Next, since $I(\tMt\land\tX ^n|\tY ^n)=0$, we have
\begin{align}
	D(P_{\tX ^n\tY ^n}||P_{X^nY^n})
	& =D(P_{\tX ^n\tY ^n}||P_{X^nY^n})+I(\tMt\land\tX ^n|\tY ^n)\nonumber \\
	& =D(P_{\tX ^n\tY ^n}||P_{X^nY^n})+H(\tX ^n|\tY ^n)-H(\tX ^n|\tY ^n,\tMt).
\end{align}
By applying \cite[Proposition 1]{TW20}, we obtain
\begin{align}
	D(P_{\tX ^n\tY ^n}||P_{X^nY^n})+H(\tX ^n|\tY ^n) \geq n(D(P_{\tX _{J}\tY _{J}}||P_{XY})+H(\tX _{J}|\tY _{J})).
\end{align}
Also,
\begin{align}
	H(\tX ^n|\tY ^n,\tMt) & =\sum_{j=1}^nH(\tX _{j}|\tY ^n,\tMt,\tX _{j}^{-})\nonumber \\
	& \leq\sum_{j=1}^nH(\tX _{j}|\tY _{j},\tMt,\tX _{j}^{-},\tY _{j}^{-})\nonumber \\
	& =nH(\tX _{J}|\tY _{J},\tU _{J},J).
\end{align}
Therefore, we obtain
\begin{align}
	D(P_{\tX ^n\tY ^n}||P_{X^nY^n}) \geq n(D(P_{\tX _{J}\tY _{J}}||P_{XY})+I(\tU _{J},J\land\tX _{J}|\tY _{J})).\label{eq:d_1}
\end{align}

Moreover, since $nR_{2}\geq H(\tMt)$, we have
\begin{align}
	nR_{2} & \geq H(\tMt)\nonumber \\
	& =H(\tMt)-H(\tMt|\tX ^n,\tY ^n)\nonumber \\
	& =I(\tMt\land\tX ^n,\tY ^n)\nonumber \\
	& =H(\tX ^n,\tY ^n)-H(\tX ^n,\tY ^n|\tMt).
\end{align}
Then, we can consider the following inequality
\begin{align}
	D(P_{\tX ^n\tY ^n}||P_{X^nY^n})
	& \geq D(P_{\tX ^n\tY ^n}||P_{X^nY^n})+H(\tX^n,\tY^n)-H(\tX^n,\tY^n|\tMt)-nR_{2}.
\end{align}
A very similar argument in \cite[Proposition 1]{TW20} leads to
\begin{align}
	&D(P_{\tX^n\tY^n}||P_{X^nY^n})+H(\tX ^n,\tY ^n) =n(D(P_{\tX _{J}\tY _{J}}||P_{XY})+H(\tX _{J},\tY _{J})).
\end{align}
Also,
\begin{align}
	H(\tX ^n,\tY ^n|\tMt) & =\sum_{j=1}^nH(\tX _{j},\tY _{j}|\tMt,\tX _{j}^{-},\tY _{j}^{-})\nonumber \\
	& =nH(\tX _{J},\tY _{J}|\tU _{J},J).
\end{align}
Therefore, we obtain
\begin{align}
	\frac{1}{n}D(P_{\tX ^n\tY ^n}||P_{X^nY^n})
	&\geq D(P_{\tX _{J}\tY _{J}}||P_{XY})+H(\tX _{J},\tY _{J})-H(\tX _{J},\tY _{J}|\tU _{J},J)-R_{2}\nonumber \\
	&=D(P_{\tX _{J}\tY _{J}}||P_{XY})+I(\tU _{J},J\land\tX _{J},\tY _{J})-R_{2}\nonumber \\
	&=D(P_{\tX _{J}\tY _{J}}||P_{XY})+I(\tU _{J},J\land\tX _{J}|\tY _{J})+I(\tU _{J},J\land\tY _{J})-R_{2}. \label{eq:c_r2}
\end{align}

Combining (\ref{eq:c_basic}), (\ref{eq:d_1}) and (\ref{eq:c_r2}), we have
\begin{align}
	\frac{1}{n}\log\frac{1}{P_{\textnormal{c}}}\geq D(P_{\tX _{J}\tY _{J}}||P_{XY})+I(\tU _{J},J\land\tX _{J}|\tY _{J})+\left|I(\tU _{J},J\land\tY _{J})-R_{2}\right|^{+}.
\end{align}
Noting (\ref{eq:R1_1}) and (\ref{R1_2}), we have a condition that $R_{1}\geq H(\tX_{J} |\tU _{J},J)$. As a consequence, we obtain
\begin{align}
	\min_{\psi,\varphi_{1},\varphi_{2}}\frac{1}{n}\log\frac{1}{P_{\textnormal{c}}}
	\geq\min_{P_{\tU \tX \tY }}\left\{D(P_{\tX \tY }||P_{XY})+I(\tU \land\tX |\tY )+\left|I(\tU \land\tY )-R_{2}\right|^{+}\right\},
\end{align}
where the minimization of $P_{\tU \tX \tY }$ is taken under the condition $R_{1}\geq H(\tX |\tU )$.
Note that we can see
\begin{align}
	D(P_{\tX \tY }||P_{XY})+I(\tU \land\tX |\tY ) &=D(P_{\tU \tX \tY }||P_{\tU |\tY }P_{XY}).
    \label{eq:divandI}
\end{align}

Consequently, we have
\begin{align}
G(R_{1},R_{2}|P_{XY}) \geq F(R_{1},R_{2}|P_{XY}),
\end{align}
except the cardinality bound, which will be discussed in the next section.

\subsection{Cardinality Bound}

By the support lemma \cite[Lemma 15.4]{CK11}, we can restrict the
cardinality of $U$ to $|\cU |\leq|\cX ||\cY |+2$ as follows.
We set the following functions on ${\cal P}(\cX\times\cY)$:
\begin{align*}
	g_{1}(P_{\tX\tY}) & =H(\tX|\tY),\\
	g_{2}(P_{\tX\tY}) & =H(\tX),\\
	g_{3}(P_{\tX\tY}) & =H(\tY).
\end{align*}
Then, observe that
\begin{align*}
	P_{\tX \tY }(x,y) & =\sum_{u}P_{\tU }(u)P_{\tX\tY|\tU}(x,y|u),\\
	H(\tX |\tU ,\tY ) & =\sum_{u}P_{\tU }(u)g_{1}(P_{\tX\tY|\tU}(\cdot,\cdot|u)),\\
	H(\tX |\tU ) & =\sum_{u}P_{\tU }(u)g_{2}(P_{\tX\tY|\tU}(\cdot,\cdot|u)),\\
	H(\tY |\tU ) & =\sum_{u}P_{\tU }(u)g_{3}(P_{\tX\tY|\tU}(\cdot,\cdot|u)).
\end{align*}
Note that $I(\tU \land\tX |\tY )=H(\tX |\tY )-H(\tX |\tU ,\tY) $ and $I(\tU \land\tY )=H(\tY )-H(\tY |\tU )$.
Therefore, by the support lemma, it suffices to take $|\cU |\leq|\cX ||\cY |-1+3=|\cX ||\cY |+2$.

\section{Application to Privacy Amplification for Bounded Storage Eavesdropper} \label{sec:pa}
Now we consider the privacy amplification that is a technique to distill a secret key from a random variable by a (possibly random) function so that the distilled key and eavesdropper's random variable are statistically independent \cite{BBR88}, \cite{BBM96}.

It is well known that there is duality between source coding problems and random number generation (e.g. \cite{Han02}).
\cite{WO12} introduced a privacy amplification problem for bounded storage eavesdropper.
In this situation, an eavesdropper can access a data that is correlated with a secret key and stored in a bounded sized storage.
This problem is interesting as a dual problem for the WAK source coding problem and in order to analyze its security, we need an analysis that is equivalent to the strong converse for the WAK network.
Also, they analyzed the security using the results regarding the strong converse for the WAK network by \cite{AGK76}.
\cite{SO19} analyzed the security using the analysis method of the exponential strong converse for the WAK network by \cite{Oo19}.
In this section, we provide our derivation of the tight exponent for the WAK network for a security analysis of this problem.

\subsection{Privacy Amplification for Bounded Storage Eavesdropper}
\begin{figure}[h]
	\centering
	\includegraphics[scale=0.9]{./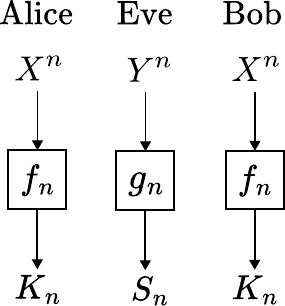}
	\caption{Privacy amplification for bounded storage eavesdropper\protect\label{fig:PA}}
\end{figure}

In this section, we introduce the privacy amplification for bounded storage eavesdropper illustrated in Fig. \ref{fig:PA}.
Let $(X^n,Y^n)$ be an i.i.d. correlated sources on $\cX^n \times \cY^n$ and the alphabets $\cX$ and $\cY$ are finite sets.
Suppose that Alice (a sender) and Bob (a receiver) have a random variable $X^n$ on a set $\cX^n$ and Eve (an eavesdropper) has a random variable $Y^n$ on $\cY^n$ that is correlated with $X^n$.
Note that $Y^n$ is encoded and stored in a storage, since the storage does not have enough space to store $Y^n$ directly.
In the privacy amplification, a secret key $K_n$ is distilled by a function
\begin{align*}
	f_n:\cX^n \to \cK_n = \{0,1\}^{nR_1}
\end{align*}
so that they and Eve's information $S_n$ are statically independent and the key $K_n$ is uniformly distributed on the key alphabet $\cK_n$.
Eve's random variable $S_n$ is stored in a storage, which is obtained as a function value of i.i.d. source $Y^n$ through a function, and the rate of the storage size is bounded.
Eve has a function value $E_n$ of some function
\begin{align*}
	g_n: \cY^n \to \cS_n = \{0,1\}^{nR_2}.
\end{align*}
For privacy amplification, Alice and Bob try to distill a secret key as long as possible.
In this paper, since we use the total variational distance as a security criterion, we require the quantity
\begin{align*}
	\Delta = d_{\textnormal{var}}(P_{K_nS_nF},P_{\textnormal{unif}}\times P_{S_n}\times P_{F})
\end{align*}
to be small, where $P_{\textnormal{unif}}$ is the uniform distribution on $\cK_n$ and $F$ is a random mapping $F:\cX^n \to \cK_n$.
Usually, we choose $F$ from the universal hash family (e.g. see \cite[Chapter 7]{TW23}).

\subsection{Privacy Amplification and WAK Network}
In this section, we evaluate the security criterion for the privacy amplification for bounded storage eavesdropper by using our derivation of the tight exponent for the WAK network.
Before we state our argument, we give a well-known lemma regarding the security criterion \cite{Ren08} (see also \cite[Chapter 7]{TW23}).
\begin{lem} \label{lem:pa}
	For a given distribution $P_{XS_n}$ and a number $\tau \in \mathbb{R}$, let
	\begin{align*}
		\tilde{\cT} = \left\{(x^n,s) : \frac{1}{P_{X^n|S_n}(x^n|s_n)} \geq \tau\right\}.
	\end{align*}
	Then, we have
	\begin{align}
		\Delta &\leq \Pr\left(	\log \frac{1}{P_{X^nS_n}(X^n|S_n)} < \tau\right)+\frac{1}{2} \sqrt{2^{-\tau+nR_1}}.
		\label{eq:PA_lemma}
	\end{align}
\end{lem}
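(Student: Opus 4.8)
The plan is to prove Lemma~\ref{lem:pa} by the standard \emph{truncation (smoothing) plus leftover-hash} argument, which is exactly the content recorded in \cite{Ren08} and \cite[Chapter~7]{TW23}. First I would rewrite the security quantity as an average over the hash choice: since $K_n=F(X^n)$ and $F$ carries the same marginal $P_F$ on both sides, $\Delta=\mathbb{E}_F\!\left[\tfrac12\sum_{k,s}\big|P_{K_nS_n|F}(k,s)-\tfrac{1}{|\cK_n|}P_{S_n}(s)\big|\right]$. Then I would introduce the truncated, sub-normalized measure $\hat P_{X^nS_n}(x^n,s)=P_{X^nS_n}(x^n,s)\,\1[(x^n,s)\in\tilde{\cT}]$, i.e.\ the restriction of $P_{X^nS_n}$ to the ``smooth'' set on which the conditional probability is small, and decompose $P_{X^nS_n}=\hat P_{X^nS_n}+(P_{X^nS_n}-\hat P_{X^nS_n})$ inside the $\ell_1$ distance, using the reference $\tfrac{1}{|\cK_n|}\hat P_{S_n}(s)$ for the main piece.

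The computation then splits into two parts. The ``tail'' piece $(P_{X^nS_n}-\hat P_{X^nS_n})$ pushed through $F$, together with the correction $\tfrac12\|P_{S_n}-\hat P_{S_n}\|_1$ that comes from replacing $\hat P_{S_n}$ by the true marginal $P_{S_n}$ in the reference, contributes at most $\Pr\big((X^n,S_n)\notin\tilde{\cT}\big)=\Pr\!\big(\log\tfrac{1}{P_{X^n|S_n}(X^n|S_n)}<\tau\big)$, giving the first term of \eqref{eq:PA_lemma}. For the main piece $\mathbb{E}_F\!\left[\tfrac12\sum_{k,s}\big|\hat P_{F(X^n)S_n|F}(k,s)-\tfrac{1}{|\cK_n|}\hat P_{S_n}(s)\big|\right]$ I would, for each fixed $s$, apply Cauchy--Schwarz ($\ell_1\le\sqrt{|\cK_n|}\,\ell_2$ over $k$), use the $2$-universality of $F$ to bound $\mathbb{E}_F$ of the resulting collision sum $\sum_k \hat b_{k,s}^2$ by $\sum_{x^n}\hat P_{X^nS_n}(x^n,s)^2+\tfrac{1}{|\cK_n|}\hat P_{S_n}(s)^2$, and then Jensen's inequality to pull $\mathbb{E}_F$ inside the square root. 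Finally, on $\tilde{\cT}$ we have $\hat P_{X^nS_n}(x^n,s)=P_{X^n|S_n}(x^n|s)P_{S_n}(s)\le 2^{-\tau}P_{S_n}(s)$, so $\sum_{x^n}\hat P_{X^nS_n}(x^n,s)^2\le 2^{-\tau}P_{S_n}(s)^2$; taking square roots and summing over $s$ telescopes to $2^{-\tau/2}$, and with $|\cK_n|=2^{nR_1}$ this yields precisely $\tfrac12\sqrt{2^{-\tau+nR_1}}$.

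I do not expect a genuine obstacle here: the hard content is the leftover hash lemma, which may be cited, so the proof is essentially an assembly of known pieces. The only point that needs care is that $\hat P_{X^nS_n}$ is sub-normalized, so the conditional leftover-hash step has to be carried out against $\tfrac{1}{|\cK_n|}\hat P_{S_n}$ rather than $\tfrac{1}{|\cK_n|}P_{S_n}$, and the gap between $\hat P_{S_n}$ and $P_{S_n}$ must be folded into the first term; this is routine bookkeeping but is where a normalization slip is most likely. I would also reconcile the two minor notational inconsistencies in the statement as written --- the probability in \eqref{eq:PA_lemma} should be in terms of the conditional $P_{X^n|S_n}$ (matching the definition of $\tilde{\cT}$), and the threshold in $\tilde{\cT}$ should read $2^{\tau}$ --- so that the two readings line up.
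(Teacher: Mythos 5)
Your proposal is correct and follows essentially the same route as the paper's proof: truncate $P_{X^nS_n}$ to the set $\tilde{\cT}$, charge the truncation (and the induced change in the $S_n$-marginal) to the tail probability via the triangle inequality, and apply the leftover hash lemma to the truncated measure, whose conditional min-entropy relative to $P_{S_n}$ is at least $\tau$. The only difference is presentational --- you unpack the universal-hashing/Cauchy--Schwarz proof of the leftover hash lemma, while the paper cites it from \cite{Ren08} and \cite[Chapter 7]{TW23} in terms of smooth conditional min-entropy --- and your reading of the two typos in the statement (the conditional distribution in the probability, and $2^{\tau}$ versus $\tau$ in the definition of $\tilde{\cT}$) matches the paper's intended meaning.
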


One of the goals of privacy amplification is to evaluate a security criterion.
We are interested in the trade-off between the key rate $R_1$ and Eve's storage size $R_2$ under the condition that the key $K_n$ and Eve's information $S_n$ are statistically independent.
Now, by using our derivation of the strong converse exponent, we evaluate the first term on the right-hand side of \eqref{eq:PA_lemma} for the privacy amplification illustrated in Fig. \ref{fig:PA}.
In the following theorem, we show the evaluation for the security criterion with our exponent.

\begin{thm} \label{thm:pa}
	Setting $\tau = nR_1'$ where $R_1' = R_1+\delta$ for a positive number $\sigma$, the security criterion $\Delta$ can be evaluated as
	\begin{align}
		\Delta \leq 2^{-nF(R_1',R_2|P_{XY})}+\frac{1}{2}\sqrt{2^{-n\delta}},
	\end{align}
    where $F(R_1',R_2|P_{XY})$ is defined by \eqref{eq:formula_R}.
    Recall that the function $F(R_1',R_2|P_{XY})$ is positive if and only if $(R_1',R_2) \notin \mathcal{R}_{\textnormal{WAK}}$.
\end{thm}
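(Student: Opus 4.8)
The plan is to combine Lemma \ref{lem:pa} with the converse-part analysis of Theorem \ref{thm:main}, observing that the privacy amplification setup in Fig. \ref{fig:PA} is exactly the ``dual'' of the WAK network: Alice/Bob's key extraction plays the role of the binning encoder $\varphi_1$ at rate $R_1$, and Eve's storage map $g_n$ plays the role of $\varphi_2$ at rate $R_2$. First I would substitute $\tau = nR_1'$ into \eqref{eq:PA_lemma}, which immediately bounds the second term by $\tfrac12\sqrt{2^{-\tau+nR_1}} = \tfrac12\sqrt{2^{-n(R_1'-R_1)}} = \tfrac12\sqrt{2^{-n\delta}}$ since $R_1' = R_1 + \delta$. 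This handles the ``smoothing'' term; the real work is the first term
\begin{align*}
	\Pr\left(\frac{1}{n}\log\frac{1}{P_{X^n|S_n}(X^n|S_n)} < R_1'\right).
\end{align*}

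Next I would show that this probability is upper bounded by $2^{-nF(R_1',R_2|P_{XY})}$. The key step is to recognize that, conditioned on the event inside the probability (or a suitable restriction of it), one can run the change-of-measure argument of the converse part verbatim. Concretely, let $\cC = \{(x^n,y^n) : \tfrac1n\log\frac{1}{P_{X^n|S_n}(x^n|g_n(y^n))} < R_1'\}$ (the relevant ``bad'' set), and define the tilted measure $P_{\tX^n\tY^n}$ by conditioning $P_{X^nY^n}$ on $\cC$, so that $D(P_{\tX^n\tY^n}\|P_{X^nY^n}) = \log\frac{1}{\Pr(\cC)}$, exactly as in \eqref{eq:c_basic}. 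The constraint defining $\cC$ gives, for every $(x^n,y^n)\in\cC$, that $\log\frac{1}{P_{X^n|S_n}(x^n|s)} < nR_1'$ with $s = g_n(y^n)$; taking the expectation under $P_{\tX^n\tY^n}$ yields $H(\tX^n|\tMt) \le H(\tX^n|S_n) \le nR_1'$, which is precisely the role played by the bound $nR_1 \ge H(\tX^n|\tMt)$ in \eqref{eq:R1_1}–\eqref{R1_2} of the converse proof. The storage rate bound $nR_2 \ge H(\tMt)$ and the Markov relation $\tMt - \tY^n - \tX^n$ (since $S_n = g_n(Y^n)$ depends only on $Y^n$) are also available, so the single-letterization steps \eqref{eq:d_1} and \eqref{eq:c_r2} go through unchanged with $\tU_J = (\tMt, \tX_J^-, \tY_J^-)$. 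This produces $\tfrac1n\log\frac{1}{\Pr(\cC)} \ge D(P_{\tX_J\tY_J}\|P_{XY}) + I(\tU_J,J \land \tX_J|\tY_J) + |I(\tU_J,J\land\tY_J) - R_2|^+$ subject to $R_1' \ge H(\tX_J|\tU_J,J)$, which is exactly $F(R_1',R_2|P_{XY})$ after the cardinality bound, giving $\Pr(\cC) \le 2^{-nF(R_1',R_2|P_{XY})}$.

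Assembling, $\Delta \le \Pr(\cC) + \tfrac12\sqrt{2^{-n\delta}} \le 2^{-nF(R_1',R_2|P_{XY})} + \tfrac12\sqrt{2^{-n\delta}}$, and the final sentence about positivity of $F$ follows immediately from the remark after Theorem \ref{thm:main} (i.e., $F(R_1',R_2|P_{XY}) > 0$ iff $(R_1',R_2)\notin\mathcal{R}_{\textnormal{WAK}}$).

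I expect the main obstacle to be the bookkeeping needed to justify that the change-of-measure converse bound applies to the set $\cC$ defined by the information-density threshold rather than to the error set of a code; in particular, one must check that the inequality $H(\tX^n|\tMt) \le nR_1'$ genuinely holds for the tilted measure (the naive pointwise bound $\log\frac{1}{P_{X^n|S_n}} < nR_1'$ on $\cC$ must be converted into an entropy bound under $P_{\tX^n\tY^n}$, which requires that $P_{\tX^n|\tMt}$-almost every realization stays below the threshold, and one may need a slightly more careful definition of $\cC$ or an extra $o(n)$ slack absorbed into $\delta$). Everything else is a direct transcription of the already-proven converse argument and the cited Lemma \ref{lem:pa}.
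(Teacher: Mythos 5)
Your proposal is correct and follows essentially the same route as the paper: substitute $\tau = nR_1'$ into Lemma \ref{lem:pa} to get the $\tfrac12\sqrt{2^{-n\delta}}$ term, condition $P_{X^nY^n}$ on the threshold set to get a tilted measure with $D(P_{\tX^n\tY^n}\|P_{X^nY^n})=\log(1/\Pr(\cC))$, and rerun the WAK converse single-letterization with $S_n=g_n(Y^n)$ in place of $\tM_2$. The "obstacle" you flag at the end is resolved exactly as the paper does it: the tilted measure is supported on $\cC$, so $\mathbb{E}_{\tilde P}[\log(1/P_{X^n|S_n})]\le nR_1'$ pointwise, and this cross-entropy dominates $H(\tX^n|\tilde S_n)$ because it equals $H(\tX^n|\tilde S_n)+D(P_{\tX^n|\tilde S_n}\|P_{X^n|S_n}|P_{\tilde S_n})$.
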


\begin{proof}
In this proof, we evaluate $\Pr(\log (1/P_{X^n|S_n}(X^n|S_n)) < nR_1')$.
First, let $P'_{\textnormal{c}}$ be a number such that
\begin{align}
	P'_{\textnormal{c}} =  \Pr\left(\frac{1}{n}\log \frac{1}{P_{X^n|S_n}(X^n|S_n)} < R_1'\right).
\end{align}
Also, we define
\begin{align}
	\cT ^c &= \left\{
	(x^n,y^n) : \frac{1}{n}\log \frac{1}{P_{X^n|S_n}(x^n|g_n(y^n))}< R_1'
	\right\}, \\
	P_{\tX^n \tY^n} &= \Pr((X^n,Y^n) = (x^n,y^n) | (X^n,Y^n) \in \cT^c) \nonumber\\
	& =\frac{P_{X^nY^n}(x^n,y^n)}{P_{X^nY^n}(\cT^c)}\1[(x^n,y^n)\in \cT^c].
\end{align}
Then, we can calculate
\begin{equation}
	D(P_{\tX ^n\tY ^n}||P_{X^nY^n})=\log\frac{1}{P'_{\textnormal{c}}},
\end{equation}
which corresponds to \eqref{eq:c_basic} of the converse part in Section \ref{sec:proof}.
This means that our converse proof method for the exponent can be also applied to the consideration of the privacy amplification for bounded storage eavesdropper.

Next, defining a random variable $\tilde{S}_n = g_n(Y^n)$, we evaluate $R_1'$ as
\begin{align}
nR_1' &\geq \mathbb{E}\left[\log \frac{1}{P_{X^n|S_n}(\tX^n|\tilde{S}_n)}\right] \nonumber\\
&= \mathbb{E}\left[\log \frac{1}{P_{\tX^n|\tilde{S}_n}(\tX^n|\tilde{S}_n)}\right]+\mathbb{E}\left[\log \frac{P_{\tX^n|\tilde{S}_n}(\tX^n|\tilde{S}_n)}{P_{X^n|S_n}(\tX^n|\tilde{S}_n)}\right] \nonumber\\
&= H(\tX^n|\tilde{S}_n)+D(P_{\tX^n|\tilde{S}_n}||P_{X^n|S_n}|P_{\tilde{S}_n}) \nonumber\\
&\geq H(\tX^n|\tilde{S}_n),
\end{align}
where $\mathbb{E}[\cdot]$ denotes the expectation of $P_{\tX^n|\tilde{S}_n}$.
Single-letterization in the similar way as in Section \ref{sec:proof} yields
\begin{align}
    R_1' \geq H(\tX|\tU),
\end{align}
where $\tU$ is a similar random variable to the one that appears in the process of single-letterization in the converse of Section \ref{sec:proof}.
Also, in the same manner, we can obtain
\begin{align}
    R_2 \geq I(\tY \land \tU).
\end{align}

According to the proof of the converse for our main theorem, we have
\begin{align}
	P'_{\textnormal{c}} \leq 2^{-nF(R_1',R_2|P_{XY})} .
\end{align}
Thus, for any $P_{XY}$, we obtain
\begin{align}
	\Pr\left(\frac{1}{n}\log \frac{1}{P_{X^n|S_n}(X^n|S_n)}<R_1' \right) \leq 2^{-nF(R_1',R_2|P_{XY})}.
\end{align}
As a conclusion, by Lemma \ref{lem:pa}, the security criterion for the privacy amplification for a bounded storage eavesdropper $\Delta$ can be bounded as
\begin{align}
	\Delta \leq 2^{-nF(R_1',R_2|P_{XY})}+\frac{1}{2}\sqrt{2^{-n\delta}}.
\end{align}
\end{proof}

When $(R_1,R_2)$ is not in the achievable rate region for the WAK network, we can take $\delta > 0$ so that  $(R_1',R_2)$ is not achievable as well.
Then, Theorem \ref{thm:pa} states that the security criterion $\Delta$ exponentially converges to $0$.

\section{Conclusion}
In this paper, we derived the tight strong converse exponent for the WAK network.
The rate region of the WAK network has two characteristics: an auxiliary random variable and the Markov chain.
Because of these characteristics, deriving the tight strong converse exponent of the WAK network has been regarded as a challenging problem.
Also, we obtained that the numerical experiment suggests that a better result can be obtained when the Markov constraint does not hold for the arguments of the optimization function $P_{\tU\tX\tY}$.
We believe that our proof methods can be also applied for other multi-terminal network problems.

\section*{Acknowledgments}
The authors would like to thank Prof. Yasutada Oohama and Prof. Jun Chen for a valuable discussion.
The authors also would like thank Prof. Tomohiko Uyematsu for suggesting to consider the special case in Section \ref{sec:comp}.

\printbibliography

\end{document}